\documentclass[10pt, conference]{new-aiaa}
\usepackage{amsmath}

\usepackage{amssymb,amsfonts}
\usepackage{algorithmic}
\usepackage{graphicx}
\usepackage{blindtext}
\usepackage{hyperref}
\usepackage[english]{babel}

\usepackage{amsthm}
\usepackage{todonotes}
\setuptodonotes{inline}
\usepackage{xcolor}
\usepackage{subcaption}
\usepackage{longtable}

\newtheorem{theorem}{Theorem}[section]
\newtheorem{corollary}[theorem]{Corollary}
\newtheorem{lemma}[theorem]{Lemma}
\newtheorem{definition}[theorem]{Definition}
\newtheorem{proposition}[theorem]{Proposition}
\newtheorem{remark}[theorem]{Remark}
\newtheorem{assumption}{Assumption}

\usepackage{url}

\input{macros.sty}

\newcommand\blfootnote[1]{%
  \begingroup
  \renewcommand\thefootnote{}\footnote{#1}%
  \addtocounter{footnote}{-1}%
  \endgroup
}

\DeclareMathOperator{\dist}{dist}

\usepackage{textcomp}
\usepackage{xcolor}
\def\BibTeX{{\rm B\kern-.05em{\sc i\kern-.025em b}\kern-.08em
    T\kern-.1667em\lower.7ex\hbox{E}\kern-.125emX}}

\title{A Temporal Barrier Framework for Collision Avoidance in Multi-Agent Autonomous Aerial Vehicles
\blfootnote{DISTRIBUTION STATEMENT A. Approved for public release. Distribution is unlimited.}
}

\author{Benedikt Barthel Sorensen\footnote{Department of Mechanical Engineering, Massachusetts Institute of Technology, Cambridge, MA, USA, bbarthel@mit.edu}}
\affil{Massachusetts Institute of Technology, Cambridge, MA, 02139, USA}

\author{Mitchell Black and Erfaun Noorani\footnote{MIT Lincoln Laboratory, Lexington, MA, USA}}
\affil{MIT Lincoln Laboratory, Lexington, MA, 02421, USA}

\author{Themistoklis P. Sapsis\footnote{Department of Mechanical Engineering, Massachusetts Institute of Technology, Cambridge, MA, USA}}
\affil{Massachusetts Institute of Technology, Cambridge, MA, 02139, USA}

\begin{document}



\maketitle








\begin{abstract}
Operating teams of autonomous aircraft in dynamic, uncertain, and potentially
adversarial environments requires safety protocols that are reliable yet
selective, and allow agents to fly in close proximity while making progress
toward mission objectives.  We introduce adversarial time-to-collision
(aTTC), a risk metric that quantifies, for a given agent, how quickly
any surrounding agent could reach it assuming adversarial intent.
We embed aTTC into the control barrier function (CBF) framework, defining
the barrier directly in \textit{time} rather than distance or velocity.
The resulting aTTC-CBF is inherently \textit{anticipatory}: agents modulate
their own velocity based not on whether a peer is on a collision course,
but on how quickly one could reach collision given its dynamical
constraints. A differentiable neural-network surrogate makes the aTTC
computable in real time within a standard CBF quadratic program. 
Across long time-horizon simulations of 3D independent-pursuit and
formation-flight scenarios, the aTTC-CBF achieves up to twice the
waypoint progress at half the collision rate of a
higher-order distance-based CBF baseline. 
\end{abstract}

\section*{Nomenclature}
{\renewcommand\arraystretch{1.0}
\setlength{\LTleft}{0pt}
\setlength{\LTright}{\fill}
\noindent\begin{longtable*}{@{}l @{\quad=\quad} l@{}}
$a_{max}$        & maximum acceleration, km/s$^2$ \\
$b$              & neural-network layer bias vector \\
$d_{ij}$         & distance between agents $i$ and $j$, km \\
$F,\,G$          & joint (stacked) drift field and input matrix \\
$f_i,\,g_i$      & drift field and input matrix of agent $i$ \\
$G_\theta$       & neural-network surrogate model for the aTTC \\
$h$              & control barrier function, s \\
$L_F h,\,L_G h$  & Lie derivatives of $h$ along $F$ and $G$ \\
$\mathcal{L},\,\mathcal{L}_H$ & weighted training loss and Huber loss \\
$M$              & control-effort weighting matrix \\
$m,\,n,\,q$      & control, state, and disturbance dimensions \\
$N_a,\,N_e,\,N_p$& number of total, evader, and pursuer agents \\
$p$              & spatial position of an agent, km \\
$R$              & mission sphere radius, km \\
$r_a$            & CBF activation radius, km \\
$r_c$            & critical safety radius, km \\
$r_{col}$        & collision radius, km \\
$T$              & look-ahead horizon (error threshold), s \\
$T_{sim}$        & total simulation time, s \\
$t_{nm}$         & near-miss onset time, s \\
$\mathbf{u},\,u_i$ & joint and per-agent control input \\
$\mathbf{u}_0$   & nominal control input \\
$v$              & agent speed, km/s \\
$v_i,\,v_j$      & velocity of agents $i$ and $j$, km/s \\
$v_{max}$        & maximum (pursuer) speed, km/s \\
$v_{nom}$        & nominal cruise speed, km/s \\
$W$              & neural-network layer weight matrix \\
$\mathbf{w},\,w_i$ & joint and per-agent process noise \\
$\mathbf{x},\,x_i$ & joint and per-agent state \\
$x,\,y,\,z$      & inertial position coordinates, km \\
$y_{out}$        & neural-network layer output \\
$\alpha$         & extended class-$\mathcal{K}_\infty$ function \\
$\beta,\,\gamma$ & learned FiLM shift and scale parameters \\
$\gamma$         & flight-path (pitch) angle, rad \\
$\delta$         & Huber-loss transition threshold, s \\
$\Delta p,\,\Delta x$ & relative position and relative state \\
$\theta$         & neural-network parameters \\
$|\dot\theta|$   & average turning rate, rad/s \\
$\nu_{max}$      & maximum pitch rate, rad/s \\
$\sigma_i$       & process-noise (diffusion) coefficients \\
$\tau^*$         & adversarial time-to-collision (aTTC), s \\
$\hat{\tau}^*$   & predicted aTTC, s \\
$\tau^*_a$       & activation aTTC threshold, s \\
$\tau_c$         & critical aTTC threshold, s \\
$\Phi$           & flow map (solution of the dynamics) \\
$\psi$           & yaw (heading) angle, rad \\
$\omega_{max}$   & maximum yaw rate, rad/s \\
$\mathcal{C},\,\mathcal{S}$ & unsafe and safe sets \\
$\mathcal{X},\,\mathcal{U},\,\mathcal{W}$ & state, control, and disturbance spaces \\
$\odot$          & element-wise (Hadamard) product \\
\multicolumn{2}{@{}l}{Subscripts}\\
$c$              & critical \\
$col$            & collision \\
$i,\,j$          & agent indices \\
$max$            & maximum \\
$nm$             & near-miss \\
$nom$            & nominal \\
\multicolumn{2}{@{}l}{Superscripts}\\
$e$              & evader \\
$p$              & pursuer \\
$*$              & optimal / collision-defining value \\
\end{longtable*}}

\section{Introduction} 
Collision avoidance in multi-agent autonomous systems is critical in many 
applications, including robots traversing complex environments, autonomous vehicles navigating city traffic, and aerial vehicles maneuvering within an airspace. 
Avoiding collision with other moving agents, a dynamic and challenging task, often relies on the agent's ability to sense both the position and velocity of other agents and proceed towards the objective in a manner which minimizes the chance of collision. 
One approach to ensuring collision avoidance is to use control barrier functions (CBFs), which enforces safety through a forward-invariance condition that can be encoded as a linear constraint on the control input in a quadratic program (QP). The domain of safety is defined as all states where the barrier function -- usually a threshold on the Euclidean distance -- is positive. Forward-invariance then works by placing a constraint on the rate-of-change of the barrier function, ensuring that if an agent is currently safe it will not move to a region where the barrier function crosses zero \cite{ames_control_2017,ames_control_2019,wang2017safety, zhang2025gcbf+}. Another approach is the velocity obstacle (VO), which takes a more geometric approach and defines for a given ego agent and a moving obstacle the set of ego velocity vectors which would give rise to a collision if both maintained constant velocity for some time \cite{fiorini_motion_1998}. These ideas are synergistic and have recently been combined in so called Velocity Obstacle CBFs (VO-CBFs) which construct the barrier function in terms of the VO geometry -- rather than standard distance-based formulation \cite{huang2025dynamic,roncero2025multi}. Working directly in velocity space streamlines the CBF formulation -- by allowing a first order definition --  while embedding the VO in the CBF-QP framework allows for less conservative interventions as compared to the basic VO formulation. Other CBF alternatives have been proposed including the future-focused CBF (ff-CBF) \cite{black_future-focused_2022} which guarantees safety over arbitrarily long time horizons and the Collision-Cone-CBF (C3BF) which ensures that the velocity of the ego agent relative to the obstacle always points away from the line-of-sight cone to the obstacle \cite{tayal2026collision}. 

In many situations the obstacles encountered by an agent are not simply neutral moving objects, but other autonomous agents that may be cooperative or adversarial. 
The case where all other agents are assumed to be cooperative and jointly seek to avoid collision is known as reciprocal collision avoidance. In this framework the burden of avoiding collision between any two pairs of agents can be split such that each agent only executes half of the necessary evasive maneuver \cite{van_den_berg_reciprocal_2008}. 
Despite the cooperative nature, it has been shown that this idea can be implemented through a highly scalable linear program without resorting to a centralized control algorithm as long as each agent \textit{assumes} that all others obey the same control and responsibility-sharing logic \cite{van_den_berg_reciprocal_2011}. 
Adversarial collision avoidance is the opposite and limiting case where the ego agent(s) assume that any other agent is \textit{actively seeking} collision -- a concept familiar from game theory \cite{isaacs_differential_1969,weintraub_introduction_2020}.
Recently this game theoretic approach has been applied to UAV collision avoidance in the scenario where an evader with finite turning radius attempts to evade one or many infinitely-agile pursuer \cite{exarchos_uav_2016}. In related work, CBFs have been applied to a similar pursuit-evasion scenario where two or more pursuers are avoiding collision among themselves in pursuit of a common target \cite{lv_control_2024}. However, these studies were restricted to very short duration missions. Therefore, they could not provide a statistical analysis of the long term collision avoidance capability. Additionally, they were restricted to 2D and used a ballistic (constant-velocity) predictor to reason about future collision risk. These assumptions rule out vertical motion and speed modulation as evasive tactics -- both of which are critical in aerial settings.   

In this work, we propose a temporal control barrier formulation where the barrier function is neither defined in space nor velocity, but directly in time. 
To this end we build on the concept of time-to-collision (TTC), a standard metric in automotive safety \cite{hayward_near-miss_1972,barhoumi_formal_2026}. 
At any instant in time, the TTC is the time to collision between any two agents were they to continue on their current trajectory at their current speed. We argue that time is, in fact, the natural dimension in which to measure collision risk. 
Consider that when negotiating rush-hour traffic, ski slopes, or a crowded concert venue we generally navigate by an intuitive sense of our reaction time rather than an explicit evaluation of relative distances and speeds. 
In fact, this concept has been applied to collision avoidance with static obstacles \cite{bosnak2017efficient}, as well as integrated with the previously mentioned VO-CBF framework \cite{roncero2025multi}. Most recently, \cite{fu_self-triggered_2026} demonstrate that letting the barrier's amplitude scale with TTC — rather than distance alone — reduces conservatism for a single Euler-Lagrange robot avoiding obstacles of known dynamics. Generally, time-to-collision metrics assume constant velocity which does not guarantee that the TTC exists -- a problem if it is to be used for practical control purposes. In automotive or robotics settings where motion is restricted only to 2 dimensions -- or even to quantized lanes in the case of automotive control -- this is generally not a critical limitation, but in 3D aerial environments more specificity is needed to ensure a finite, and thus useful, metric.

In this work, our contributions are as follows.
\begin{enumerate}
    \item We derive a \textit{worst-case} extension of TTC which we term adversarial TTC (aTTC), where it is assumed that one agent drives towards collision under maximum control effort. Under mild assumptions, this metric is always finite and thus serves as a more natural fit for real-time control in 3D environments.
    \item We build a lightweight, differentiable surrogate model for our aTTC measure, which can be directly integrated with a CBF-QP control law and allows us to avoid costly integration of the system dynamics.
    \item We demonstrate our framework on multi-agent aerial vehicles in both independent and collaborative (formation-flight) pursuit evasion scenarios. Through long time-horizon simulations we are able to capture a statistical performance evaluation and find that our approach allows for lower collisions, greater mission progress, and tighter formation coherence as compared to a distance-based CBF formulation.
\end{enumerate}

The rest of the paper is organized as follows. The problem statement and adversarial-time-to-collision framework is introduced in \S\ref{sec:math}, the integration of which into practical CBF based control scheme is described in \S\ref{sec:ttc_cbf}. We then present our results in \S\ref{sec:results} and a discussion of the significance, limitations thereof as well as future directions in \S\ref{sec:conclusion}

\section{Mathematical Formulation}\label{sec:math}
\subsection{Problem Formulation}\label{sec:prob}
Consider a multi-agent system, each of whose $N_a$ agents $i \in \mathcal{I}$ are modeled according to the following control-affine dynamics
\begin{equation}\label{eq:agent-dynamics}
    \dot x_i = f_i(x_i) + g_i(x_i)u_i + w_i, \quad x_i(0) = x_{i,0},
\end{equation}
where $x_i(t) = (p_i(t),\, o_i(t)) \in \mathcal{X} \subseteq \mathbb{R}^{n}$, denotes the state consisting of position $p_i(t) \in \mathbb{R}^{d}$ and additional internal variables $o_i(t) \in \mathbb{R}^{q}$ such that $q \geq 0$ and $n=d+q$. The control input is $u_i \in \mathcal{U} \subseteq \R^m$, with $\mathcal{U}$ compact and convex, and $w_i \in \mathcal{W} \subset \R^n$ is the bounded process noise of agent~$i$. 
The drift field $f_i\colon \mathcal{X} \to \R^n$ and input matrix $g_i\colon \mathcal{X} \to \R^{n \times m}$ are locally Lipschitz on~$\mathcal{X}$. 
Stacking the individual states, inputs, and disturbances yields the joint system
\begin{equation}\label{eq:joint-dynamics}
    \dot{\mathbf{x}} = F(\mathbf{x}) + G(\mathbf{x})\,\mathbf{u}+ \mathbf{w}, \quad x(0) = x_0,
\end{equation}
where $\mathbf{x} = (x_1,\dots,x_{N_a}) \in \mathcal{X}^{N_a}$, $\mathbf{u} = (u_1,\dots,u_{N_a}) \in \mathcal{U}^{N_a}$, $\mathbf{w} = (w_1,\dots,w_{N_a}) \in \mathcal{W}^{N_a}$, and
\[
    F(\mathbf{x}) = \begin{bmatrix} f_1(x_1) \\ \vdots \\ f_{N_a}(x_{N_a}) \end{bmatrix}, \,
   G(\mathbf{x}) = \begin{bmatrix} g_1(x_1) & & \\ & \ddots & \\ & & g_{N_a}(x_{N_a}) \end{bmatrix},
\]
such that for a locally Lipschitz control law $\mathbf{u} = k(\mathbf{x})$, $k: \mathcal{X}^{N_a} \to \mathcal{U}^{N_a}$, the system~\eqref{eq:joint-dynamics} admits a unique solution, which shall be denoted at time $t \geq 0$ by $\Phi(\mathbf{x}, \mathbf{u}, \mathbf{w}, t)$. 
By a slight abuse of notation, $\Phi_i$ will be used to denote the solution to \eqref{eq:agent-dynamics} for agent $i$. 
We assume that some subset of agents $\mathcal{I}_e \subset \mathcal{I}$, $|\mathcal{I}_e| = N_e$, seek a nominal objective specified by a nominal control vector $\mathbf{u}_0\in \Uset^{N_e}$ and attempt to evade collisions with the others. We refer to these agents as \textit{evaders}. 
The remaining subset of agents, $\mathcal{I}_p = \mathcal{I} \setminus \mathcal{I}_e$, $|\mathcal{I}_p| = N_p$, follow some unknown, potentially adversarial control scheme. In the adversarial context these are referred to as \textit{pursuers}. 
In either case they may be considered as an external forcing to the joint system~\eqref{eq:joint-dynamics}.

Let $v_i \equiv \dot p_i$ denote the velocity and $a_i \equiv \dot v_i$ denote the acceleration of agent $i$ respectively, neither necessarily a state variable. 
In order to more accurately model aerial vehicles in practice, we assume the following.
\begin{assumption}[Velocity and Acceleration Limits]\label{ass:speed-accel-limits}
    Each vehicle is subject to speed and acceleration limits, i.e., $\forall i \in \mathcal{I}$, $|v_i(t)| \leq \bar v_i, |a_i(t)| \leq \bar a_i$ for $\bar v_i, \bar a_i > 0$, for all $t \geq 0$.
\end{assumption}

The aim of this work is twofold: first, to derive a metric quantifying the risk of future collisions that is sensitive to both the relative positions of agents and their governing dynamics; 
second, to design a control scheme for the evader agents based on the derived metric which ensures progress towards the nominal objective while avoiding collisions with pursuers and other evaders and adhering to control limits (e.g., maximum speed, acceleration, or turning rate). 
As such, we assume that each agent fills a spherical volume (area if in $\R^2$) of radius $r_{col} > 0$ -- the collision radius. A collision is defined as an event where the inter-agent distance simultaneously falls below twice this radius and is non-increasing,
\begin{equation}\label{eq:col_def}
    d_{ij}(x_i, x_j) \equiv \|p_i-p_j\|\leq 2r_{col}, \qquad \dot{d}_{ij} \leq 0.
\end{equation}
The rate condition ensures that in a discrete time setting only the moment of collision is counted as a unique event. 


\subsection{Adversarial Time-To-Collision}
The inter-agent distance $d_{ij}$ as defined by \eqref{eq:col_def} is a common metric for quantifying collision risk. 
However, such a metric is unable to distinguish between spatially close configurations that are dangerous (e.g., the agents are close and converging) versus safe (e.g., the agents are close but diverging). 
Even if the relative velocities are explicitly accounted for through a velocity-obstacle type formulation~\cite{huang2025dynamic}, it is not clear how a given separation in velocity space translates to a quantifiable risk assessment. 
In control design, we argue that the natural domain in which to quantify risk is \textit{time} -- if a danger arises, how much \textit{time} do I have to take evasive action?

Thus, to derive our risk metric we build on the concept of \textit{time-to-collision} (TTC). For two agents with given states $x_i$ and $x_j$, the TTC is defined as the elapsed time $\tau^*$ until a collision (\ref{eq:col_def}) is recorded under integration of the joint dynamics (\ref{eq:agent-dynamics}) from  $x_i$ and $x_j$ -- and some assumed control input.  
Practically, the control input is generally neglected -- in other words, the agents are assumed to move with a constant velocity vector along their current trajectory.
We now define the set of control inputs rendering the velocity vector of agent $i$ constant by $\Uset^v_i(\state_i) \triangleq \{ u \in \Uset : a_i(\state_i, u) = 0 \}$, and assume that the set $\Uset^v_i$ is never empty, formalized as follows.
\begin{assumption}[Constant-Velocity Control]\label{ass:const-vel}
     Each agent's constant-velocity control set is non-empty, i.e., $\Uset^v_i(\state_i) \neq \emptyset$ for all $\state_i \in \Xset$, for each $i \in \mathcal{I}$.
\end{assumption}

For 2D settings, such as in automotive control where the agent motion is constrained to a plane, or even to discrete lanes, assuming constant-velocity motion is reasonable. However, in a 3D aerial setting this assumption applied to each vehicle leads to a metric which is almost never finite. In such cases the agent volume $\sim r_c^3$ is generally much smaller than the size of the domain. This, in combination with the increased degree of freedom implies that the probability of two constant-velocity agents colliding is very low -- leading to the uninformative conclusion that $\tau^* = \infty$.


To derive a metric that is universally well-defined, we introduce the notion of \textit{adversarial} time-to-collision (aTTC). 

\begin{definition}[aTTC]\label{def:attc}
    Given agents $i, j \in \mathcal{I}$ with $\state_i \in \Xset_i$, $\state_j \in \Xset_j$, the \textbf{adversarial time-to-collision} (aTTC) is
    \begin{equation}\label{eq:attc}
        \tau_{ij}^* \triangleq
        \inf_{u_j \in \mathcal{M}(\Uset)} \inf\!\Big\{\tau \geq 0 \;\Big|\; d_{ij}\left(\Phi_i(x_i, \pi^v_i(x_i), \mathbf{0}, \tau), \Phi_j\!\big(x_j, u_j, \mathbf{0}, \tau\big)\right) \leq 2r_{col}\Big\},
    \end{equation}
    where $\pi^v_i$ is any measurable selection $\pi^v_i(\state_i) \in \Uset^v_i(\state_i)$, $\mathcal{M}(\Uset)$ denotes the measurable controls taking values in $\Uset$, and $\inf\emptyset = +\infty$. 
\end{definition}
\noindent That is, aTTC captures the time-to-collision between an ego agent in constant-velocity motion and a pursuer agent that pursues as aggressively as its speed and acceleration limits allow.
By Assumption~\ref{ass:const-vel}, a policy $\pi^v_i(\state_i) \in \Uset^v_i(\state_i)$ exists and the ego trajectory in \eqref{eq:attc} satisfies $\dot v_i \equiv 0$, hence
\begin{equation*}
    p_i(\tau) = p_i + \tau\, v_i, \qquad \tau \geq 0,
\end{equation*}
for every admissible selection $\pi^v_i$. 
Since $d_{ij}$ depends only on the position channel, $\tau^*_{ij}$ is independent of the selection and \eqref{eq:attc} is well defined. 
Selections may differ in the internal variables $o_i$, which do not enter the collision condition.
Critically, under mild assumptions on the maximum speed of the agents the aTTC is guaranteed to be finite.

\begin{proposition}[Finiteness of the aTTC]\label{prop:finite}
Consider agents $i,j\in\mathcal I$ governed by \eqref{eq:agent-dynamics} under Assumptions~\ref{ass:speed-accel-limits}--\ref{ass:const-vel}, and let $\mathcal A_j(x_j)\subseteq\mathbb R^{d}$ denote the set of accelerations attainable by agent $j$ at $x_j$ over $u_j\in\mathcal U$. 
Suppose that
\begin{enumerate}[label=(\roman*)]
  \item agent $j$ is strictly faster than the \emph{instantaneous} speed of
        agent $i$, i.e.\ $\bar v_j>\lVert v_i\rVert$;
  \item the acceleration of agent $j$ is freely assignable within its limit,
        $\mathcal B(0,\bar a_j)\subseteq\mathcal A_j(x_j)$ for all $x_j\in\mathcal X$.
\end{enumerate}
Then
\begin{equation}\label{eq:tau_bound}
  \tau^{*}_{ij}\;\le\;
  \frac{d_{ij}(x_i,x_j)-2r_{col}+2\bar v_j^{2}/\bar a_j}
       {\bar v_j-\lVert v_i\rVert}\;<\;\infty
\end{equation}
for every configuration $(x_i,x_j)\in\mathcal X\times\mathcal X$ with $d_{ij}\ge 2r_{col}$.
\end{proposition}

\begin{proof}
Since \eqref{eq:attc} takes the infimum over admissible controls of agent $j$, it suffices to find one admissible control attaining \eqref{eq:tau_bound}.
By Assumption~\ref{ass:const-vel} and the aTTC defined by~\eqref{eq:attc}, the ego trajectory is the ray $p_i(\tau)=p_i+\tau v_i$ with $v_i$ constant.
We organize the proof into two components: finite displacement during a finite reorientation period, and finite closing time from zero velocity.

First, apply $a_j=-\bar a_j\,v_j/\lVert v_j\rVert$ while $v_j\neq0$, admissible by (ii). 
Then $\tfrac{d}{dt}\lVert v_j\rVert=-\bar a_j$, so $v_j(t_1)=0$ for some $t_1\le\bar v_j/\bar a_j$, and the displacement incurred over $[0,t_1]$ is at most $\bar v_j^{2}/(2\bar a_j)$. 
Since agent $j$ reorients itself it may then dwell at rest for an arbitrary duration.

Next, fix a candidate time $T$ and set the aim point $y_T:=p_i+T v_i$, which is constant once $T$ is fixed. 
From rest at $p_j(t_1)$, agent $j$ applies $a_j=\bar a_j\hat u$ with $\hat u:=(y_T-p_j(t_1))/\lVert y_T-p_j(t_1)\rVert$ until $\lVert v_j\rVert=\bar v_j$, and $a_j=0$ thereafter; both are admissible by (ii) and Assumption~\ref{ass:speed-accel-limits}. 
Writing $L:=\lVert y_T-p_j(t_1)\rVert$, closing to within $2r_{col}$ of $y_T$ along this straight line takes time at most $(L-2r_{col})/\bar v_j+\bar v_j/(2\bar a_j)$.

Capture at time $T$ is therefore achievable whenever
\begin{equation}\label{eq:T_ineq}
  T\;\ge\;\underline T(T)\;:=\;
  \frac{\bar v_j}{\bar a_j}
  +\frac{\bar v_j}{2\bar a_j}
  +\frac{L-2r_{col}}{\bar v_j}.
\end{equation}
By the triangle inequality and the displacement bound $\bar v_j^2 / (2\bar a_j)$, $L\le T\lVert v_i\rVert+d_{ij}+\bar v_j^{2}/(2\bar a_j)$, so
\[
  \underline T(T)\;\le\;\frac{2\bar v_j}{\bar a_j}
  +\frac{d_{ij}-2r_{col}}{\bar v_j}
  +\frac{\lVert v_i\rVert}{\bar v_j}\,T .
\]
Thus $\underline T$ is affine in $T$ with slope $\lVert v_i\rVert/\bar v_j<1$ by (i), so $T=\underline T(T)$ admits a unique solution $T^{\dagger}$ bounded above by the right-hand side of the first inequality in~\eqref{eq:tau_bound}. 
Because agent $j$ may dwell at rest for any duration, its set of achievable arrival times at $y_T$ is the interval $[\underline T(T),\infty)$; taking $T=T^{\dagger}$ it arrives at $y_{T^\dagger}$ exactly at $T^{\dagger}$, at which instant
$\lVert p_j-p_i\rVert\le2r_{col}$. Hence $\tau^{*}_{ij}\le T^{\dagger}$.
\end{proof}

\begin{remark}
The initial reorientation phase uses $v_j=0$, which is admissible under Assumption~\ref{ass:speed-accel-limits} and for the model of \S\ref{sec:results}. 
For airframes with a positive stall speed the phase must be replaced by a minimum-radius heading reversal, which inflates the constant multiplying $\bar v_j^{2}/\bar a_j$ but leaves the structure of \eqref{eq:tau_bound} unchanged.
At the parameters of \S\ref{sec:results} ($\bar v_j=0.75$~km/s, $\bar a_j=0.05$~km/s$^2$, $\lVert v_i\rVert=0.25$~km/s, $d_{ij}=5$~km) the bound evaluates to $\approx 66$~s.
\end{remark}

The proposition guarantees that the aTTC is well-defined and finite whenever a
pursuer's speed bound exceeds the instantaneous speed of a nonreactive agent. We
stress that $\tau^*_{ij}$ is a one-sided reachability time rather than the value of a
differential game: agent $j$ optimises while agent $i$ is held non-reactive. The
metric is therefore conservative in the pursuer's assumed intent and optimistic in
the ego's assumed response, in contrast to the saddle-point formulations
of~\cite{exarchos_uav_2016}. The asymmetry is deliberate --- it is what makes the
metric computable pairwise and composable across many agents --- but it means
$\tau^*_{ij}$ lower-bounds the time available to a manoeuvring ego rather than
characterising the outcome of the encounter.
Conceptually, the aTTC quantifies the earliest time an agent would collide with the nonreactive agent under worst-case dynamics, not the agent's actual intent. 
As such, aTTC serves as a robust metric of adversarial risk.
Direct computation of the aTTC requires solving an optimal control problem and integrating the system dynamics forward from the current state to predicted collision, a computationally demanding task in all but the simplest models. 
Furthermore, integration into an optimization based control law requires a differentiable representation of the mapping from the system state to the aTTC. 
We detail how these computational challenges are addressed in Section~\ref{sec:ttc_cbf}.

\section{Time-To-Collision Barrier Functions}\label{sec:ttc_cbf}
We aim to design a control scheme $\mathbf{u}_e \in \Uset^{N_e}$ for the evader agents which maximizes the progress towards the nominal objective while minimizing collisions with pursuers and other evaders. 
For this purpose we utilize the framework of  the \textit{Control Barrier Function} (CBF) \cite{ames_control_2017}.
\subsection{Control Barrier Functions}
Let $h: \mathcal{X}^{N_a} \to \R$ be a continuously differentiable function, with $0$ a regular value, encoding the unsafe and safe configurations as $\mathcal{C}$ and $\mathcal{S}$ respectively, such that
\begin{align}
    \mathcal{C} &= \{\mathbf{x} \in \mathcal{X}^{N_a} \mid h(\mathbf{x}) < 0\}, \label{eq:collision-set} \\
    \mathcal{S} &= \{\mathbf{x} \in \mathcal{X}^{N_a} \mid h(\mathbf{x}) \geq 0\}, \label{eq:safe-set}
\end{align}
where the boundary is defined by $\partial \mathcal{S} = \{\mathbf{x} \in \mathcal{X}^{N_a} \mid h(\mathbf{x}) = 0\}$.
The goal is to enforce that $\mathbf{x}(t) \in \mathcal{S}$ on the maximal interval of existence of the closed-loop solution. 
This holds whenever $\mathcal{S}$ is \textit{forward-invariant} under~\eqref{eq:joint-dynamics}. 
Because~\eqref{eq:joint-dynamics} is driven by the bounded process noise $\mathbf{w}$, the closed loop is a differential inclusion. Thus, invariance must hold over all admissible disturbance realizations $\mathbf{w} \in \mathcal{W}^{N_a}$.
The corresponding necessary and sufficient condition is a perturbed form of Nagumo's theorem~\cite{blanchini1999set}, as follows.
\begin{lemma}\label{lem:nagumo}
    Let $h$ be continuously differentiable with $0$ a regular value, let $k\colon \Xset^{N_a} \to \Uset^{N_a}$ be locally Lipschitz, and let $\mathcal{W}$ be compact.
    Then $\Sset$ is robustly forward-invariant for
    \begin{equation}\label{eq:closed-loop-disturbed}
        \dot{\mathbf{x}} = F(\mathbf{x}) + G(\mathbf{x})k(\mathbf{x}) + \mathbf{w}(t),
        \qquad \mathbf{w}(t) \in \mathcal{W}^{N_a},
    \end{equation}
    i.e., every solution with $\mathbf{x}(0) \in \Sset$ remains in $\Sset$ on its maximal
    interval of existence for every admissible disturbance realization, if and only if
    \begin{equation}\label{eq:forward-invariance}
        L_F h(\mathbf{x}) + L_G h(\mathbf{x})k(\mathbf{x}) - \sigma_w(\mathbf{x}) \geq 0,
        \qquad \forall\, \mathbf{x} \in \partial \Sset,
    \end{equation}
    where
    \begin{equation}\label{eq:sigma_w}
        \sigma_w(\mathbf{x}) \triangleq \sup_{\mathbf{w} \in \mathcal{W}^{N_a}}
        \big[-\nabla h(\mathbf{x})^\top \mathbf{w}\big].
    \end{equation}
\end{lemma}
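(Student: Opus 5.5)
We prove the two directions separately, reducing robust invariance of $\Sset$ for the differential inclusion \eqref{eq:closed-loop-disturbed} to the classical Nagumo tangency condition for the set-valued right-hand side
\[
\mathcal{F}(\mathbf{x}) \triangleq F(\mathbf{x}) + G(\mathbf{x})k(\mathbf{x}) + \mathcal{W}^{N_a}.
\]
The first preliminary is the tangent cone. Since $0$ is a regular value of $h$, for every $\mathbf{x}\in\partial\Sset$ we have $\nabla h(\mathbf{x})\neq 0$. The Bouligand tangent cone is therefore the closed half-space
\[
T_\Sset(\mathbf{x})=\{\mathbf{v} : \nabla h(\mathbf{x})^\top \mathbf{v}\ge 0\}.
\]
In the interior of $\Sset$ the tangent cone is the whole space. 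Because $\mathcal{W}$ is compact, $\sigma_w$ in \eqref{eq:sigma_w} is finite and attained. Consequently, condition \eqref{eq:forward-invariance} is exactly the statement
\[
\min_{\mathbf{w}\in\mathcal{W}^{N_a}} \nabla h(\mathbf{x})^\top\big(F+Gk+\mathbf{w}\big)\ge 0,
\]
that is, $\mathcal{F}(\mathbf{x})\subseteq T_\Sset(\mathbf{x})$ for all $\mathbf{x}\in\partial\Sset$.

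\textbf{Necessity.} We argue by contradiction using constant disturbances. Suppose \eqref{eq:forward-invariance} fails at some $\mathbf{x}_0\in\partial\Sset$. Pick the maximizer $\mathbf{w}_0$ in \eqref{eq:sigma_w}, so that
\[
\nabla h(\mathbf{x}_0)^\top\big(F+Gk+\mathbf{w}_0\big)(\mathbf{x}_0)<0.
\]
Apply the constant disturbance $\mathbf{w}(t)\equiv\mathbf{w}_0$. The resulting ODE has a locally Lipschitz right-hand side, so it has a unique $C^1$ solution from $\mathbf{x}_0$. Along this solution $\tfrac{d}{dt}h(\mathbf{x}(t))|_{t=0}<0$ while $h(\mathbf{x}_0)=0$. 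Hence $h(\mathbf{x}(t))<0$ for small $t>0$, and the solution leaves $\Sset$.

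\textbf{Sufficiency.} We need that every measurable disturbance realization keeps the trajectory in $\Sset$. The clean route is to invoke the viability/invariance theorem for differential inclusions (e.g. Blanchini's set-invariance results or Aubin--Clarke). The inclusion $\mathcal{F}$ has nonempty compact convex values if $\mathcal{W}$ is convex. If $\mathcal{W}$ is not convex, one can replace it by its convex hull: $\sigma_w$ is unchanged because the objective is linear, and the Filippov solution set only grows, so no generality is lost. The map $\mathcal{F}$ is also locally Lipschitz in the Hausdorff metric, since $F+Gk$ is locally Lipschitz and $\mathcal{W}^{N_a}$ is a constant set. For such Lipschitz inclusions, strong invariance of a closed set is equivalent to $\mathcal{F}(\mathbf{x})\subseteq T_\Sset(\mathbf{x})$ on $\Sset$. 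Inside $\Sset$ this is trivial, and on $\partial\Sset$ it is exactly \eqref{eq:forward-invariance}.

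A direct argument is also possible. Let $\mathbf{x}(\cdot)$ be any solution and suppose $h(\mathbf{x}(t_1))<0$. Let $t_0$ be the last time before $t_1$ with $h=0$. Use the Lipschitz dependence of $\mathcal{F}$ together with the $C^1$ regularity of $h$ near $\partial\Sset$. Near $\mathbf{x}(t_0)$ this gives
\[
\tfrac{d}{dt}h(\mathbf{x}(t))\ge -L\,|h(\mathbf{x}(t))|
\]
almost everywhere, for $t$ such that $h\le 0$. Grönwall's inequality applied to $\max(-h,0)$ then forces $h\ge 0$ on $[t_0,t_1]$, which is a contradiction.

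\textbf{Main obstacle.} The hardest step is sufficiency. The tangency condition holds only on the boundary, so we need a one-sided Lipschitz estimate of $\nabla h^\top\mathcal{F}$ in a neighborhood of the boundary. That estimate requires $\nabla h$ to be locally Lipschitz, not merely continuous, or else an appeal to the general Clarke--Ledyaev--Stern strong invariance theorem. The tangent-cone version of the latter needs only the Lipschitz property of $\mathcal{F}$ and closedness of $\Sset$, so we would cite it rather than redo the Grönwall estimate. The remaining subtlety, convexification of $\mathcal{W}$, is handled by the linearity remark above.
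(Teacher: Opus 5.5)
Your proposal is correct and follows essentially the paper's route. The paper simply cites Blanchini's perturbed Nagumo theorem and observes that, since $\nabla h^\top\mathbf{w}$ is linear in $\mathbf{w}$, tangency for every admissible disturbance reduces to the support-function term $\sigma_w$; your half-space tangent cone, constant-disturbance necessity argument, convexification remark, and appeal to the strong-invariance theorem make explicit what that citation leaves implicit. As you note yourself, the Gr\"onwall sketch on $h$ does not close for merely $C^1$ $h$, but it does if you run it on $d_\Sset(\mathbf{x}(t))$ instead, using the same $\mathbf{w}(t)$ at $\mathbf{x}(t)$ and at a nearest point of $\Sset$; that version needs only the half-space tangent cone and local Lipschitzness of $F+Gk$, and no convexity of $\mathcal{W}$.
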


\noindent The supremum in~\eqref{eq:sigma_w} exists and is finite due to compactness of $\mathcal{W}^{N_a}$, and the $\sigma_w$ term is essential for invariance.
The tangency condition must hold for \emph{every} admissible disturbance, and since $\nabla h^\top \mathbf{w}$ is linear in $\mathbf{w}$ the worst case is the support function $\sigma_w$ defined by \eqref{eq:sigma_w}. 
One tool for enforcing forward invariance in control design is the control barrier function.
Partition the joint input as $\mathbf{u} = (\mathbf{u}_e, \mathbf{u}_p)$ with $\mathbf{u}_e \in \Uset^{N_e}$, $\mathbf{u}_p \in \Uset^{N_p}$, and correspondingly $G(\mathbf{x}) = [\,G_e(\mathbf{x})\;\; G_p(\mathbf{x})\,]$.

\begin{definition}[Adversarial CBF]\label{def:cbf}
    Given $\Sset \subset \mathcal{D} \subset \Xset^{N_a}$ defined by \eqref{eq:safe-set}
    for a continuously differentiable $h\colon \mathcal{X}^{N_a} \to \R$ with $0$ a regular value,
    $h$ is an \textbf{adversarial control barrier function} on $\mathcal{D}$ if there
    exists an extended class-$\mathcal{K}_\infty$ function $\alpha$ such that
    \begin{equation}\label{eq:cbf-condition}
        \sup_{\mathbf{u}_e \in \Uset^{N_e}}
        \Big[ L_F h + L_{G_e} h\, \mathbf{u}_e
        - \sigma_p(\mathbf{x}) - \sigma_w(\mathbf{x}) \Big]
        \;\geq\; -\alpha\big(h(\mathbf{x})\big),
        \quad \forall\, \mathbf{x} \in \mathcal{D},
    \end{equation}
    where
    \begin{equation}\label{eq:support-fns}
        \sigma_p(\mathbf{x}) \triangleq \sup_{\mathbf{u}_p \in \Uset^{N_p}}
        \big[-L_{G_p} h(\mathbf{x})\,\mathbf{u}_p\big],
    \end{equation}
    and $\sigma_w$ is defined as in~\eqref{eq:sigma_w}.
\end{definition}
Note that the supremum in~\eqref{eq:support-fns} is also finite and attained when $\Uset$ is compact, and that the terms $\sigma_p$ and $\sigma_w$ preserve the control-affine nature of the inequality \eqref{eq:cbf-condition} in $\mathbf{u}_e$. 
Satisfying~\eqref{eq:cbf-condition} certifies safety against every admissible pursuer input and disturbance simultaneously, while the disturbance-free CBF condition introduced by \cite{ames2016control} is recovered when $\mathcal{I}_p = \emptyset$ and $\mathcal{W} = \{0\}$.
\begin{theorem}[Robust safety]\label{thm:safety}
    Let $h$ be an adversarial CBF on $\mathcal{D} \supseteq \Sset$ per Definition~\ref{def:cbf} with $\alpha$ locally Lipschitz, and let $k_e\colon \mathcal{D} \to \Uset^{N_e}$ be any locally Lipschitz function satisfying
    \begin{equation}\label{eq:kcbf}
        k_e(\mathbf{x}) \in K_{\mathrm{cbf}}(\mathbf{x}) \triangleq
        \big\{ \mathbf{u}_e \in \Uset^{N_e} :
        L_F h(\mathbf{x}) + L_{G_e} h(\mathbf{x})\, \mathbf{u}_e - \sigma_p(\mathbf{x}) - \sigma_w(\mathbf{x}) \geq -\alpha(h(\mathbf{x})) \big\}.
    \end{equation}
    Then $\Sset$ is robustly forward-invariant for the closed loop under \emph{every} measurable pursuer input $\mathbf{u}_p$ and disturbance $\mathbf{w}$.
\end{theorem}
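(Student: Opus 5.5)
The plan is to reduce the statement to the tangency characterization of Lemma~\ref{lem:nagumo}, treating the pursuer input $\mathbf{u}_p$ as an additional bounded disturbance entering through $G_p$. Fix an arbitrary measurable pursuer input $\mathbf{u}_p(t) \in \Uset^{N_p}$ and disturbance $\mathbf{w}(t) \in \mathcal{W}^{N_a}$, and write the closed loop as
\begin{equation*}
    \dot{\mathbf{x}} = F(\mathbf{x}) + G_e(\mathbf{x})k_e(\mathbf{x}) + \tilde{\mathbf{w}}(t,\mathbf{x}), \qquad \tilde{\mathbf{w}} \triangleq G_p(\mathbf{x})\mathbf{u}_p(t) + \mathbf{w}(t).
\end{equation*}
Since $\Uset$ and $\mathcal{W}$ are compact and $G_p$ is continuous, $\tilde{\mathbf{w}}$ ranges, at each $\mathbf{x}$, over a compact set. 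The worst-case decrease of $h$ it can produce is
\begin{equation*}
    \sup_{\mathbf{u}_p,\mathbf{w}}\big[-\nabla h^\top (G_p\mathbf{u}_p + \mathbf{w})\big] = \sigma_p(\mathbf{x}) + \sigma_w(\mathbf{x}),
\end{equation*}
because the supremum of a sum over a product set splits into the sum of the suprema. This identity is the first step: it shows that the combined support function of the augmented disturbance is exactly the quantity appearing in~\eqref{eq:kcbf}.

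Next, I would verify the boundary tangency condition. On $\partial\Sset$ we have $h=0$, and since $\alpha$ is extended class-$\mathcal{K}_\infty$, $\alpha(0)=0$. Membership $k_e(\mathbf{x}) \in K_{\mathrm{cbf}}(\mathbf{x})$ then gives
\begin{equation*}
    L_F h + L_{G_e} h\, k_e - \sigma_p - \sigma_w \ge 0 \quad \text{on } \partial\Sset,
\end{equation*}
which is the perturbed condition~\eqref{eq:forward-invariance} with $\sigma_w$ replaced by the combined support function. Lemma~\ref{lem:nagumo} (sufficiency direction) would then yield robust invariance. The lemma requires a locally Lipschitz closed-loop vector field; this holds for the $(F, G_e k_e)$ part because $k_e$, $f_i$ and $g_i$ are locally Lipschitz. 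The disturbance part is only measurable in $t$, so solutions must be understood in the Carath\'eodory sense, as the lemma already does for $\mathbf{w}$. The Lipschitz assumption on $\alpha$ is not needed for this boundary argument.

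The main obstacle is that the augmented disturbance $G_p(\mathbf{x})\mathbf{u}_p$ is state-dependent, whereas Lemma~\ref{lem:nagumo} is stated for an additive $\mathbf{w}(t)$ drawn from a fixed set. I would first try to invoke the lemma in its differential-inclusion form $\dot{\mathbf{x}} \in F + G_e k_e + G_p\Uset^{N_p} \oplus \mathcal{W}^{N_a}$, whose right-hand side is compact, convex-valued and upper semicontinuous. If that extension is not acceptable, I would fall back on a direct comparison argument. Suppose a trajectory leaves $\Sset$. Set $t_0 = \sup\{t : h(\mathbf{x}(s)) \ge 0 \ \forall s \le t\}$. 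For almost every $t$ we have
\begin{equation*}
    \tfrac{d}{dt}h(\mathbf{x}(t)) \ge L_F h + L_{G_e}h\,k_e - \sigma_p - \sigma_w \ge -\alpha(h(\mathbf{x}(t))),
\end{equation*}
and this bound holds on all of $\mathcal{D}$, not merely on the boundary. Because $\alpha$ is locally Lipschitz, the comparison lemma against $\dot y = -\alpha(y)$, $y(t_0)=0$, whose unique solution is $y\equiv 0$, gives $h(\mathbf{x}(t)) \ge 0$ just after $t_0$. This contradicts the definition of $t_0$, and it is exactly here that the Lipschitz hypothesis on $\alpha$ is used. Finally, because $\mathbf{u}_p$ and $\mathbf{w}$ were arbitrary, invariance holds under every admissible pursuer input and disturbance.
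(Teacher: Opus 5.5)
Your proposal is correct and uses the same ingredients as the paper's proof. These are the pointwise bound $\dot h \ge L_F h + L_{G_e}h\,k_e - \sigma_p - \sigma_w \ge -\alpha(h)$ from the support functions, the comparison lemma against $\dot y = -\alpha(y)$ (local Lipschitzness of $\alpha$ gives uniqueness of the zero solution), and the boundary reduction to Lemma~\ref{lem:nagumo} via $\alpha(0)=0$. The paper leads with the comparison step from $t=0$ with $y(0)=h(\mathbf{x}_0)$ and adds the Nagumo remark at the end, whereas you lead with Nagumo and localize the comparison at the exit time $t_0$.

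Your absorption of $G_p(\mathbf{x})\mathbf{u}_p$ into the disturbance, with combined support function $\sigma_p+\sigma_w$, is the step the paper leaves implicit when it says \eqref{eq:kcbf} reduces to \eqref{eq:forward-invariance}. Lemma~\ref{lem:nagumo}, however, covers only state-independent disturbances, and a strong-invariance result for your inclusion would need Lipschitz, not merely upper semicontinuous, set-valued dependence. So it is your comparison argument, exactly as in the paper, that actually carries the proof.
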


\begin{proof}
    The set $K_{\mathrm{cbf}}(\mathbf{x})$ is nonempty by Definition~\ref{def:cbf} and compactness of $\Uset$. Along any solution,
    \begin{align*}
        \dot h &= L_F h(\mathbf{x}(t)) + L_{G_e}h(\mathbf{x}(t))\,k_e(\mathbf{x}(t)) + L_{G_p}h(\mathbf{x}(t))\,\mathbf{u}_p(t) + \nabla h^\top \mathbf{w}(t) \\
        &\geq\; L_F h(\mathbf{x}(t)) + L_{G_e}h(\mathbf{x}(t))\,k_e(\mathbf{x}(t)) - \sigma_p(\mathbf{x}(t)) - \sigma_w(\mathbf{x}(t)) \;\geq\; -\alpha(h(\mathbf{x}(t))),
    \end{align*}
    by definition of $\sigma_w$ and $\sigma_p$ in~\eqref{eq:sigma_w} and~\eqref{eq:support-fns} respectively. 
    Since $\alpha$ is locally Lipschitz, the Comparison Lemma~\cite{khalil2002nonlinear} applied to $\dot y = -\alpha(y)$ with $y(0) = h(\mathbf{x}_0) \geq 0$ gives $h(\mathbf{x}(t)) \geq y(t) \geq 0$. 
    On $\partial\Sset$ we have $h = 0$ and $\alpha(0) = 0$, so~\eqref{eq:kcbf} reduces
    to~\eqref{eq:forward-invariance} and Lemma~\ref{lem:nagumo} applies.
\end{proof}

Because the joint dynamics~\eqref{eq:joint-dynamics} inherit a block-diagonal structure from the decoupled agent dynamics~\eqref{eq:agent-dynamics}, the Lie derivatives in~\eqref{eq:cbf-condition} decompose as
\begin{equation*}
    L_F h(\mathbf{x}) = \sum_{i=1}^{N_a} \frac{\partial h}{\partial x_i} f_i(x_i), \quad
    L_{G_e} h(\mathbf{x})\mathbf{u} = \sum_{i=1}^{N_e} \frac{\partial h}{\partial x_i} g_i(x_i)\,u_i, \quad
    L_{G_p} h(\mathbf{x})\mathbf{u} = \sum_{i=N_e+1}^{N_a} \frac{\partial h}{\partial x_i} g_i(x_i)\,u_i,
\end{equation*}
so that the CBF condition can be evaluated and enforced using each agent's local dynamics, while the safety guarantee remains rigorously defined at the global level.

\subsection{Adversarial Time-To-Collision Barrier Functions}
For collision avoidance, an often-used control barrier function $h(\mathbf{x})$ is the (square of the) inter-agent distance, 
\begin{equation}\label{eq:hocbf_barrier}
    h(\mathbf{x}) = d^2_{ij}(x_i, x_j) - 4r_c^2 
\end{equation}
with critical safety radius $r_c\geq r_{col}$.
For systems for which the function $h$ has relative-degree greater than one with respect to the system dynamics, i.e., those for which the function $h$ must be differentiated more than once before a control input appears explicitly, high-order CBFs (HOCBF)~\cite{xiao2021high} are required for control design.

We instead propose a barrier function which is defined directly in the space of the aTTC between agents. 
Let $\mathcal{P} \subseteq \mathcal{I}_e \times \mathcal{I}$ denote the set of monitored agent pairs and define, for each $(i,j) \in \mathcal{P}$,
\begin{equation}\label{eq:attc_barrier}
    h_{ij}(\mathbf{x}) = \tau^*_{ij}(x_i, x_j) - \tau_c,
\end{equation}
where $\tau_{ij}^*$ is given by \eqref{eq:attc} and $\tau_c > 0$ is a user defined critical-time for safe deconfliction, known also as the safety threshold. 
This accounts for both the relative positions and velocities as well as the dynamics and control limits of the agents. 
Furthermore, time in lieu of distance is a natural barrier in autonomous (or crewed) flight where the critical threshold is the reaction time of the control algorithm or the human pilot. 



Throughout this section, we specialize to evader dynamics for which the velocity is a state
variable,
\begin{equation}\label{eq:ego-dynamics}
    \dot p_i = v_i, \qquad \dot v_i = f^v_i(x_i) + g^v_i(x_i) u_i,
    \qquad \mathrm{rank}\; g^v_i(x_i) = d .
\end{equation}
No structural assumption is placed on the dynamics of agent $j$, which remain those
of~\eqref{eq:agent-dynamics}. Under Assumption~\ref{ass:const-vel} the ego trajectory in
Definition~\ref{def:attc} is the ray $p_i + \tau v_i$, so the aTTC is a first-hitting time
between a moving point and the reachable tube of the pursuer. Let
\begin{equation}\label{eq:reach-set}
    \mathcal{R}_j(x_j, \tau) \triangleq
    \big\{ \Pi_p\,\Phi_j(x_j, u_j, \mathbf{0}, \tau) \;:\; u_j \in \mathcal{M}(\Uset) \big\}
    \subset \R^d
\end{equation}
denote the time-$\tau$ position reachable set of agent $j$ from $x_j$ under its own, unperturbed dynamics, where $\Pi_p\colon \Xset \to \R^d$ denotes the projection onto the position coordinates,
$\Pi_p x_i = p_i$.
For the following assumption, let 
\begin{equation}\label{eq:gap}
    \psi_{ij}(\tau; x_i, x_j) \triangleq
    \dist\!\big(p_i + \tau v_i,\; \mathcal{R}_j(x_j,\tau)\big) - 2 r_{col}
\end{equation}
denote the capture gap, the distance from agent $i$'s straight-line position at time $\tau$ to the pursuer's reachable set at that time (minus the collision diameter).



\begin{assumption}[Transversal capture]\label{as:transversal}
    At $\tau = \tau^*_{ij}$ the minimum-time trajectory of agent $j$ is unique, with terminal
    state $x^\star_j$, and the projection of $p_i + \tau^*_{ij} v_i$ onto
    $\mathcal{R}_j(x_j, \tau^*_{ij})$ is unique, denoted $y^*_{ij}$. Moreover the capture gap is
    strictly decreasing at $\tau^*_{ij}$:
    \begin{equation}\label{eq:transversality}
        \partial_\tau \psi_{ij}(\tau^*_{ij})
        = 
        \big\langle \hat e_{ij}, v_i \big\rangle
        - \max_{u \in \Uset} \big\langle \hat e_{ij},\,
        f^p_j(x_j^\star) + g^p_j(x_j^\star) u \big\rangle
        \;<\; 0,
    \end{equation}
    where the maximization over $u$ selects the worst-case, fastest-closing pursuer input, and
    $\hat e_{ij} \triangleq \big(p_i + \tau^*_{ij} v_i - y^*_{ij}\big)/2 r_{col}$
    is the unit capture direction, oriented from the pursuer toward the ego.
\end{assumption}

\noindent Condition~\eqref{eq:transversality} is a Petrov condition at the capture point: the pursuer must be able to close along $\hat e_{ij}$ strictly faster than the ego recedes. 
It is the pointwise counterpart of the speed-advantage hypothesis of Proposition~\ref{prop:finite}, imposed only where it is needed rather than globally. 
The assumption can fail at grazing captures, where the ego ray is tangent to $\partial\mathcal{R}_j$ so that $\partial_\tau\psi_{ij}=0$, or where the minimum-time trajectory or the projection $y^*_{ij}$ is non-unique. These are the configurations at which the aTTC loses its relative degree, and the transversality assumption excludes them. 
For the closed-form surrogate in \S\ref{sec:surrogate}, it holds automatically whenever the pursuer possesses a speed advantage.

\begin{proposition}[Differentiability of aTTC]\label{prop:sensitivity}
    Let Assumptions~\ref{ass:const-vel} and~\ref{as:transversal} hold at $\mathbf{x}$ with
    $\tau^*_{ij} < \infty$. Then $\tau^*_{ij}$ is continuously differentiable in a
    neighbourhood of $(p_i, v_i)$, with
    \begin{equation}\label{eq:gen-gradients}
        \nabla_{p_i} \tau^*_{ij}
        = \frac{\hat e_{ij}}{\big|\partial_\tau \psi_{ij}(\tau^*_{ij})\big|},
        \qquad
        \nabla_{v_i} \tau^*_{ij} = \tau^*_{ij}\, \nabla_{p_i} \tau^*_{ij} .
    \end{equation}
\end{proposition}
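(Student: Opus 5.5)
The natural route is the implicit function theorem applied to the capture gap $\psi_{ij}$ from \eqref{eq:gap}, viewed as a function of $\tau$ and of the ego data $(p_i,v_i)$ with $x_j$ held fixed. First I will check that $\tau^*_{ij}$ is characterised as a root of $\psi_{ij}$. By Assumption~\ref{ass:const-vel} the ego trajectory is the ray $p_i+\tau v_i$. By Definition~\ref{def:attc}, $\tau^*_{ij}$ is the first $\tau$ at which some admissible pursuer position lies within $2r_{col}$ of that ray. Hence $\tau^*_{ij}=\inf\{\tau\ge 0:\psi_{ij}(\tau;p_i,v_i)\le 0\}$. Continuity of $\tau\mapsto\mathcal R_j(x_j,\tau)$ in the Hausdorff sense (bounded speed) then gives $\psi_{ij}(\tau^*_{ij})=0$ whenever $\tau^*_{ij}>0$ is finite. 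I therefore define
\[
\Psi(\tau,p,v)\triangleq \dist\big(p+\tau v,\ \mathcal R_j(x_j,\tau)\big)-2r_{col},
\]
and seek a $C^1$ solution $\tau(p,v)$ of $\Psi=0$ near $(\tau^*_{ij},p_i,v_i)$.

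The second step is to show that $\Psi$ is $C^1$ near this point and to compute its partials. The dependence on $(p,v)$ enters only through the point $z=p+\tau v$. The distance function to a closed set is differentiable at $z$ off the set whenever the nearest point is unique, which Assumption~\ref{as:transversal} provides through $y^*_{ij}$. Its gradient is the unit vector $\hat e_{ij}=(z-y^*_{ij})/\|z-y^*_{ij}\|$; on $\Psi=0$ this is exactly the normalisation by $2r_{col}$ used in the assumption. The chain rule then gives $\partial_p\Psi=\hat e_{ij}$ and $\partial_v\Psi=\tau\,\hat e_{ij}$.

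For $\partial_\tau\Psi$ there are two contributions. One is from the ego motion, $\langle\hat e_{ij},v_i\rangle$. The other is from the outward growth of the reachable set at $y^*_{ij}$, whose normal speed along $\hat e_{ij}$ is the support-type term $\max_u\langle\hat e_{ij},f^p_j(x^\star_j)+g^p_j(x^\star_j)u\rangle$. The second term relies on uniqueness of the minimum-time trajectory and the terminal state $x^\star_j$. Together these reproduce \eqref{eq:transversality}, which I take as given by the assumption.

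The main obstacle is upgrading this pointwise derivative to genuine $C^1$ regularity of $\Psi$ in a neighbourhood, as the implicit function theorem requires. Uniqueness of the projection and of $x^\star_j$ is assumed only at the capture point. My plan is to argue by upper semicontinuity: the projection and optimal-trajectory maps are closed, set-valued and locally bounded, so single-valuedness at the base point gives continuity there. This continuity carries over to $\hat e_{ij}$ and to the support term in a small neighbourhood. If full $C^1$ fails, I would fall back on Clarke's implicit function theorem. The strict inequality \eqref{eq:transversality} keeps the generalized $\tau$-derivative bounded away from zero, so it still yields a Lipschitz implicit function. Continuity of the generalized gradients at the point then collapses them to the classical derivative.

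The last step is the differentiation itself. With $\partial_\tau\Psi<0$, the implicit function theorem gives
\[
\nabla_{p_i}\tau^*_{ij}=-\frac{\partial_p\Psi}{\partial_\tau\Psi}=\frac{\hat e_{ij}}{|\partial_\tau\psi_{ij}(\tau^*_{ij})|},
\qquad
\nabla_{v_i}\tau^*_{ij}=-\frac{\partial_v\Psi}{\partial_\tau\Psi}=\tau^*_{ij}\nabla_{p_i}\tau^*_{ij},
\]
which is \eqref{eq:gen-gradients}. Finally, the locally defined root must coincide with the first hitting time. Strict decrease of $\Psi$ at the root, together with $\Psi>0$ on $[0,\tau^*_{ij})$ and its compactness, rules out an earlier crossing for nearby $(p,v)$.
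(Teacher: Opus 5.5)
Your proposal is the paper's argument: differentiability of the distance to $\mathcal R_j(x_j,\tau)$ at the unique projection $y^*_{ij}$ (the paper cites Danskin's theorem), the chain rule through $z=p_i+\tau v_i$, and the implicit function theorem with the sign fixed by $\partial_\tau\psi_{ij}<0$, plus two sound additions the paper omits (identifying $\tau^*_{ij}$ as a root of $\psi_{ij}$, and matching the local root to the first hitting time). One caveat: your semicontinuity/Clarke fallback gives a Lipschitz $\tau^*_{ij}$ that is differentiable at $(p_i,v_i)$ with the stated gradients, but not $C^1$ on a neighbourhood; that stronger conclusion needs Assumption~\ref{as:transversal} to hold at nearby states as well, and the paper's direct use of the classical implicit function theorem tacitly assumes this too.
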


\begin{proof}
    Fix $\tau$ and write $z(\tau) = p_i + \tau v_i$. 
    By Danskin's theorem for parametric minimization~\cite[Prop.~B.25]{bertsekas1999nonlinear}, at a unique minimizer the distance in~\eqref{eq:gap} is differentiable in $z$ with gradient $(z - y^*_{ij})/\|z - y^*_{ij}\| = \hat e_{ij}$, where $y^*_{ij}$ is the projection of $p_i + \tau^*_{ij} v_i$ onto
    $\mathcal{R}_j(x_j, \tau^*_{ij})$.
    Since $\partial z/\partial p_i = I$ and
    $\partial z/\partial v_i = \tau I$, the chain rule gives
    $\nabla_{p_i}\psi_{ij} = \hat e_{ij}$ and
    $\nabla_{v_i}\psi_{ij} = \tau^*_{ij}\,\hat e_{ij}$. Assumption~\ref{as:transversal}
    supplies $\partial_\tau \psi_{ij} \neq 0$, so the implicit function theorem applied to
    $\psi_{ij}(\tau; p_i, v_i) = 0$ yields
    $\nabla \tau^*_{ij} = -\nabla \psi_{ij}/\partial_\tau \psi_{ij}$, and
    $\partial_\tau\psi_{ij} < 0$ fixes the sign.
\end{proof}

\begin{corollary}[Relative-degree one]\label{cor:rel-degree}
    Let $i \in \mathcal{I}_e$ have dynamics~\eqref{eq:ego-dynamics} and let
    $\mathbf{x} \in \partial\Sset$, so that $\tau^*_{ij} = \tau_c > 0$. Then
    \begin{equation}\label{eq:lie-attc}
        L_{g_i} h_{ij}(\mathbf{x})
        = \frac{\tau_c}{\big|\partial_\tau \psi_{ij}(\tau_c)\big|}\,
        \hat e_{ij}^{\,\top} g^v_i(x_i) \;\neq\; 0 .
    \end{equation}
    Consequently $\nabla h_{ij} \neq 0$ on $\partial\Sset$, so $0$ is a regular value of
    $h_{ij}$ as required by Definition~\ref{def:cbf} and Lemma~\ref{lem:nagumo}, and $h_{ij}$
    has relative degree one with respect to $u_i$ irrespective of the dynamics of agent $j$.
\end{corollary}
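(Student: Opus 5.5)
The plan is to compute $L_{g_i} h_{ij}$ directly from Proposition~\ref{prop:sensitivity} and the structure of the ego dynamics~\eqref{eq:ego-dynamics}, and then to read off nondegeneracy from the rank condition on $g^v_i$.

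Since $h_{ij} = \tau^*_{ij} - \tau_c$, its gradient in the ego state coincides with that of $\tau^*_{ij}$. The point $\mathbf{x}\in\partial\Sset$ gives $\tau^*_{ij}=\tau_c<\infty$, so Proposition~\ref{prop:sensitivity} applies and $\tau^*_{ij}$ is $C^1$ near $(p_i,v_i)$. The state $x_i$ may also carry internal variables beyond $(p_i,v_i)$. These do not enter $\tau^*_{ij}$, because by Definition~\ref{def:attc} and the discussion after it the aTTC depends on $x_i$ only through the ray $p_i+\tau v_i$. So the partial derivatives in those coordinates vanish.

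Under~\eqref{eq:ego-dynamics} the input enters only the velocity channel. The input matrix of agent $i$ therefore has block form with zero position rows and $g^v_i$ in the velocity rows. Hence
\[
L_{g_i}h_{ij} = \nabla_{v_i}\tau^{*\top}_{ij}\, g^v_i(x_i).
\]
Substituting~\eqref{eq:gen-gradients} gives $\nabla_{v_i}\tau^*_{ij} = \tau_c\,\hat e_{ij}/|\partial_\tau\psi_{ij}(\tau_c)|$, which is exactly~\eqref{eq:lie-attc}.

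Nonvanishing follows from three facts:
\begin{itemize}
\item $\tau_c>0$;
\item $|\partial_\tau\psi_{ij}|$ is finite and positive by Assumption~\ref{as:transversal};
\item $\hat e_{ij}$ is a unit vector, since on capture $\|p_i+\tau^*v_i-y^*_{ij}\| = 2r_{col}$ by continuity of the gap at the first hitting time.
\end{itemize}
Because $g^v_i(x_i)\in\R^{d\times m}$ has rank $d$, its rows are linearly independent. A nonzero row vector $\hat e_{ij}^\top$ times $g^v_i$ is therefore nonzero, i.e.\ $g^{v\top}_i$ is injective.

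Since $L_{g_i}h_{ij} = \nabla_{x_i} h_{ij}^\top g_i \neq 0$, we get $\nabla h_{ij}\neq 0$, so $0$ is a regular value on $\partial\Sset$. Relative degree one holds because $u_i$ appears in the first derivative $\dot h_{ij}$ with a nonzero coefficient. This uses only the ego-side gradient, so it is independent of agent $j$'s dynamics, which enter only through $\sigma_p$ and the drift term.

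I expect the only delicate point to be justifying that $\|\hat e_{ij}\|=1$, i.e.\ that the gap equals exactly zero at $\tau^*_{ij}$. I will argue this from continuity of $\psi_{ij}$ in $\tau$ together with $d_{ij}\ge 2r_{col}$ initially (on $\partial\Sset$ we have $\tau_c>0$, so we are not already in contact). Together with the infimum definition, this gives $\psi_{ij}(\tau^*_{ij})=0$.
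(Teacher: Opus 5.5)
Your proposal is correct and follows the paper's proof: chain rule through $(p_i,v_i)$ with the input entering only the velocity channel, substitution of $\nabla_{v_i}\tau^*_{ij}=\tau^*_{ij}\nabla_{p_i}\tau^*_{ij}$ from Proposition~\ref{prop:sensitivity}, and nonvanishing from the positive prefactor, the unit vector $\hat e_{ij}$, and full row rank of $g^v_i$. Your continuity argument that $\psi_{ij}(\tau^*_{ij})=0$, hence $\|\hat e_{ij}\|=1$, fills in a point the paper simply builds into Assumption~\ref{as:transversal}.
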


\begin{proof}
    Since $h_{ij}$ depends on $x_i$ only through $(p_i, v_i)$ and
    $\partial \dot p_i/\partial u_i = 0$ in~\eqref{eq:ego-dynamics}, the chain rule and
    Proposition~\ref{prop:sensitivity} give~\eqref{eq:lie-attc}. The prefactor is strictly
    positive, $\hat e_{ij}$ is a unit vector, and $g^v_i$ has full row rank, so the product is
    nonzero.
\end{proof}

\noindent Equation~\eqref{eq:gen-gradients} admits a direct physical interpretation, that the gradient of
the aTTC with respect to the ego velocity is parallel to the capture direction, so the barrier
constraint steers the evader's acceleration away from the point at which it would be
intercepted, with a gain inversely proportional to the closing rate at capture.

\subsection{Time-to-Collision under an Assumed Pursuit Law}\label{sec:surrogate}
The exact aTTC of Definition~\ref{def:attc} optimizes over all admissible pursuer controls and is not available in closed form. 
In practice, we commit to a representative pursuit law (e.g., a fixed feedback $\pi^P_j$ such as proportional
navigation (PN)) and evaluate the time-to-collision that law induces. 
This trades the worst-case guarantee for a metric that is computable by a single forward integration and reflects a concrete, physically realizable adversary.

\begin{definition}[Pursuit-law TTC]\label{def:pn-ttc}
    Fix a locally Lipschitz pursuit law $\pi^P_j\colon \Xset \times \Xset \to \Uset$
    and let $p_j^{P}(\tau) \triangleq \Pi_p\,\Phi_j\big(x_j, \pi^P_j(\cdot\,,x_i), \mathbf{0}, \tau\big)$
    denote the resulting pursuer position under the ego's constant-velocity motion
    $p_i(\tau) = p_i + \tau v_i$. The \textbf{pursuit-law time-to-collision} is
    \begin{equation}\label{eq:pn-ttc}
        \tau^{P}_{ij} \triangleq \inf\big\{\tau \geq 0 \;:\; \psi_{ij}(\tau) \leq 0\big\},
        \qquad
        \psi_{ij}(\tau) \triangleq \big\| p_i + \tau v_i - p_j^{P}(\tau)\big\| - 2 r_{col},
    \end{equation}
    with $\inf\emptyset = +\infty$.
\end{definition}

\noindent Unlike the exact aTTC, $\tau^{P}_{ij}$ is defined by a single trajectory rather than an optimization, so the reachable-set distance in~\eqref{eq:gap} reduces to the point-to-point gap in~\eqref{eq:pn-ttc}. 
We stress this change in status. Because $\pi^P_j$ is one particular pursuer rather than the closing-optimal one, $\tau^{P}_{ij}$ carries no worst-case ordering with respect to $\tau^*_{ij}$, and it may over- or
under-estimate the true minimum-time intercept depending on geometry. 
The barrier
\begin{equation}\label{eq:pn-barrier}
    h_{ij}^p(\mathbf{x}) = \tau^{P}_{ij}(x_i, x_j) - \tau_c
\end{equation}
therefore certifies safety \emph{against the assumed pursuit law}, a guarantee common in the guidance literature, and not against an arbitrary adversary. 
Robustness to the latter is recovered only by reinstating the reachable-set formulation of Definition~\ref{def:attc}.

Finiteness of $\tau^{P}_{ij}$ follows from classical results on the chosen law: a PN pursuer with a speed advantage and bounded lateral acceleration intercepts a constant-velocity target in finite time~\cite{zarchan2012tactical}, so $\tau^{P}_{ij} < \infty$ whenever $\bar v_j > \|v_i\|$ under standard PN gains. 



\begin{proposition}[Differentiability of the pursuit-law TTC]\label{prop:lower-bound}
    Let $\pi^P_j$ be continuously differentiable, let $\tau^{P}_{ij} < \infty$, and
    suppose the transversality condition
    \begin{equation}\label{eq:pn-transversal}
        \partial_\tau \psi_{ij}(\tau^{P}_{ij})
        = \big\langle \hat e_{ij},\, v_i - \dot p_j^{P}(\tau^{P}_{ij}) \big\rangle < 0,
        \qquad
        \hat e_{ij} \triangleq \frac{p_i + \tau^{P}_{ij} v_i - p_j^{P}(\tau^{P}_{ij})}{2 r_{col}},
    \end{equation}
    holds. Then $\tau^{P}_{ij}$ is continuously differentiable in a neighbourhood of
    $(p_i, v_i)$, with
    \begin{equation}\label{eq:pn-gradients}
        \nabla_{p_i} \tau^{P}_{ij}
        = \frac{\hat e_{ij}}{\big|\partial_\tau \psi_{ij}(\tau^{P}_{ij})\big|},
        \qquad
        \nabla_{v_i} \tau^{P}_{ij} = \tau^{P}_{ij}\, \nabla_{p_i} \tau^{P}_{ij}.
    \end{equation}
\end{proposition}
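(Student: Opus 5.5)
The plan is to repeat the implicit-function argument of Proposition~\ref{prop:sensitivity}, now applied to the single-trajectory gap of \eqref{eq:pn-ttc}. Because the gap is a point-to-point distance, no Danskin step is needed. Set
\[
\Psi(\tau; p_i, v_i) \triangleq \big\| p_i + \tau v_i - p_j^{P}(\tau; p_i, v_i) \big\| - 2 r_{col}.
\]
Here I make explicit that the pursuer trajectory depends on the ego state, since the feedback $\pi^P_j(\cdot\,, x_i)$ is evaluated along the ego ray.

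\textbf{Step 1: regularity of $\Psi$.} Since $\pi^P_j$ is $C^1$, the closed-loop pursuer vector field $f_j + g_j \pi^P_j$ is $C^1$ in the state and in the parameters $(p_i, v_i)$ entering through the ray. This requires $f_j, g_j$ to be $C^1$ as well, which I will state as a standing strengthening of local Lipschitzness. Classical differentiable dependence of ODE solutions on initial data and parameters then gives that $(\tau, p_i, v_i) \mapsto p_j^{P}$ is $C^1$, with $\partial_\tau p_j^{P} = \dot p_j^{P}$. At $\tau = \tau^{P}_{ij}$ we have $\Psi = 0$, so the norm is taken at a vector of length $2r_{col} > 0$. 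The norm is smooth there, with gradient $\hat e_{ij}$. Hence $\Psi$ is $C^1$ near $(\tau^{P}_{ij}, p_i, v_i)$, and $\partial_\tau \Psi$ equals \eqref{eq:pn-transversal}, which is negative by hypothesis.

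\textbf{Step 2: implicit function theorem and the first-hitting property.} The implicit function theorem gives a $C^1$ root $\tilde\tau(p_i, v_i)$ near $(p_i, v_i)$ with $\tilde\tau = \tau^{P}_{ij}$ at the base point. I must show that $\tilde\tau$ coincides with the infimum in \eqref{eq:pn-ttc}, i.e.\ that no earlier root appears under small perturbation. The argument runs as follows.
\begin{itemize}
\item On $[0, \tau^{P}_{ij} - \epsilon]$ we have $\Psi > 0$ by minimality of $\tau^P_{ij}$. This uses $\Psi(0) > 0$, i.e.\ $d_{ij} > 2r_{col}$, which I will assume, as is implicit in the statement.
\item By compactness of that interval, $\Psi$ is bounded below there by a positive constant, and the bound persists under small perturbations by continuity.
\item On $[\tau^{P}_{ij} - \epsilon, \tau^{P}_{ij} + \epsilon]$, strict monotonicity from $\partial_\tau \Psi < 0$ (uniform near the base point) leaves $\tilde\tau$ as the unique root.
\end{itemize}
Thus $\tau^{P}_{ij} = \tilde\tau$ locally, and it is $C^1$. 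I expect this to be the main technical step; the rest is bookkeeping.

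\textbf{Step 3: gradients.} The formula $\nabla \tau^{P}_{ij} = -\nabla_{(p_i, v_i)} \Psi / \partial_\tau \Psi$ applies. The explicit dependence through $z = p_i + \tau v_i$ contributes $\hat e_{ij}$ and $\tau^{P}_{ij} \hat e_{ij}$, exactly as in Proposition~\ref{prop:sensitivity}. Dividing by $|\partial_\tau \Psi|$ then yields \eqref{eq:pn-gradients}.

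\textbf{Caveat on the pursuer-sensitivity term.} There is an additional term $-\langle \hat e_{ij}, \partial p_j^{P} / \partial(p_i, v_i) \rangle$ coming from the pursuer's reaction to the ego state. Formula \eqref{eq:pn-gradients} holds exactly only when this sensitivity is neglected, i.e.\ when the pursuer trajectory is frozen, in line with the reachable-set analogue where $\mathcal{R}_j$ does not depend on $x_i$. I would therefore either state \eqref{eq:pn-gradients} as the explicit (frozen-pursuer) gradient, or add a hypothesis making that term vanish. In either case I would present the full expression, which contains the extra variational-equation term.
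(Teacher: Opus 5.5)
Your route is the paper's. Its proof notes that the point-to-point gap is $C^1$ where $\|z-p_j^{P}\|=2r_{col}>0$, then defers to the proof of Proposition~\ref{prop:sensitivity}: the implicit function theorem with $\nabla_{p_i}\psi_{ij}=\hat e_{ij}$ and $\nabla_{v_i}\psi_{ij}=\tau\hat e_{ij}$, obtained by differentiating only through $z=p_i+\tau v_i$. Your Steps~1--2 fill in what that two-line argument takes for granted. First, $f_j,g_j$ must be $C^1$, not merely locally Lipschitz, for the flow to depend differentiably on the data. Second, the implicit-function root must still be the \emph{first} hitting time. That second check genuinely needs $d_{ij}>2r_{col}$: if $\psi_{ij}(0)=0$, then locally $\tau^P_{ij}=\max\{0,\tilde\tau\}$, which is not differentiable.

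Your caveat is correct, and it is exactly where the paper's proof fails, not a technicality. In Proposition~\ref{prop:sensitivity} the set $\mathcal R_j(x_j,\tau)$ does not depend on $x_i$, so differentiating through $z$ alone is legitimate. Here $p_j^{P}$ is generated by the feedback $\pi^P_j(\cdot\,,x_i)$, and the full result is
\[
\nabla_{p_i}\tau^{P}_{ij}=\frac{\big(I-\partial_{p_i}p_j^{P}\big)^{\top}\hat e_{ij}}{|\partial_\tau\psi_{ij}|},
\qquad
\nabla_{v_i}\tau^{P}_{ij}=\frac{\big(\tau^{P}_{ij}I-\partial_{v_i}p_j^{P}\big)^{\top}\hat e_{ij}}{|\partial_\tau\psi_{ij}|},
\]
with the Jacobians of $p_j^{P}$ taken at $\tau^{P}_{ij}$ from the variational equation.

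The extra terms do not vanish in general. Take a static ego ($v_i=0$) and the smooth law $\dot p_j=k(p_i-p_j)$, $k>0$; its values stay in the ball of radius $k\,d_{ij}$, so a compact $\mathcal{U}$ suffices. Then $p_i-p_j^{P}(\tau)=(p_i-p_j)e^{-k\tau}$, which gives $\tau^{P}_{ij}=k^{-1}\ln(d_{ij}/2r_{col})$, $\hat e_{ij}=(p_i-p_j)/d_{ij}$ and $\partial_\tau\psi_{ij}=-2kr_{col}<0$. So every hypothesis holds. The true gradient is $\nabla_{p_i}\tau^{P}_{ij}=\hat e_{ij}/(k\,d_{ij})$, while \eqref{eq:pn-gradients} gives $\hat e_{ij}/(2kr_{col})$, too large by the factor $d_{ij}/2r_{col}$. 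Closed-loop laws such as pure pursuit or PN behave the same way against a moving ego.

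So the proposition is false as stated for ego-dependent pursuit laws, and you are right not to try to prove it as written. Keep your full expression. Recover \eqref{eq:pn-gradients} only under the explicit hypothesis $(\partial_{(p_i,v_i)}p_j^{P})^{\top}\hat e_{ij}=0$ (e.g.\ a law $\pi^P_j$ that ignores $x_i$), or label it as the frozen-pursuer partial gradient. Corollary~\ref{cor:pn-rel-degree} inherits the issue: its nonvanishing claim must then be checked for $(\tau^{P}_{ij}I-\partial_{v_i}p_j^{P})^{\top}\hat e_{ij}$ rather than for $\hat e_{ij}$.
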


\begin{proof}
    Along the ego ray $z(\tau) = p_i + \tau v_i$ the gap $\psi_{ij}$
    in~\eqref{eq:pn-ttc} is a composition of the smooth flow $p_j^{P}$ and the
    Euclidean norm, hence continuously differentiable wherever $\|z - p_j^{P}\| =
    2r_{col} > 0$. The remainder follows from the proof of Proposition~\ref{prop:sensitivity}.
\end{proof}

\noindent The single-trajectory setting sharpens the transversality condition of
Assumption~\ref{as:transversal}: the worst-case maximization $\max_{u\in\Uset}\langle
\hat e_{ij}, \cdot\rangle$ collapses to the actual guidance-law closing velocity
$\langle \hat e_{ij}, \dot p_j^{P}\rangle$, and the unique-projection hypothesis is
vacuous because a single point replaces the reachable set. Danskin's theorem is
correspondingly not required; \eqref{eq:pn-gradients} follows from the implicit
function theorem alone. The gradient again lies along the line of sight $\hat e_{ij}$,
so the barrier steers the evader's acceleration directly away from the predicted
intercept, with a gain set by the closing rate at capture.

\begin{corollary}[Relative degree one]\label{cor:pn-rel-degree}
    Under the hypotheses of Proposition~\ref{prop:lower-bound}, if $i \in \mathcal{I}_e$
    has dynamics~\eqref{eq:ego-dynamics}, then $L_{g_i} h_{ij}^p(\mathbf{x}) =
    |\partial_\tau\psi_{ij}|^{-1}\,\hat e_{ij}^\top g^v_i(x_i) \neq 0$ on
    $\partial\Sset$, so $h_{ij}^p$ has relative degree one with respect to $u_i$ and the
    high-order construction of~\cite{xiao2021high} is not required.
\end{corollary}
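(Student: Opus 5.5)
The plan mirrors the proof of Corollary~\ref{cor:rel-degree}, with Proposition~\ref{prop:lower-bound} used in place of Proposition~\ref{prop:sensitivity}.

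First, I would check that $h^p_{ij}$ depends on $x_i$ only through $(p_i, v_i)$. In Definition~\ref{def:pn-ttc} the ego enters only through its constant-velocity ray $p_i + \tau v_i$. The pursuit law $\pi^P_j(\cdot\,, x_i)$ is evaluated along that same ray, and the internal variables $o_i$ do not affect the ray. So $\nabla_{o_i} h^p_{ij}$ plays no role in $L_{g_i} h^p_{ij}$ once the ego dynamics take the form \eqref{eq:ego-dynamics}.

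Second, I would apply the chain rule to the input channel of \eqref{eq:ego-dynamics}. Since $\dot p_i = v_i$ does not depend on $u_i$, only the velocity block $g^v_i$ contributes:
\[
L_{g_i} h^p_{ij}(\mathbf{x}) = \nabla_{v_i}\tau^{P}_{ij}^{\,\top} g^v_i(x_i).
\]
Substituting \eqref{eq:pn-gradients} gives
\[
L_{g_i} h^p_{ij}(\mathbf{x}) = \frac{\tau^{P}_{ij}}{|\partial_\tau \psi_{ij}|}\, \hat e_{ij}^{\,\top} g^v_i(x_i).
\]
On $\partial\Sset$ we have $\tau^{P}_{ij} = \tau_c > 0$, so this is the displayed expression up to the positive factor $\tau_c$. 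That factor does not affect the sign, the non-vanishing, or the relative-degree conclusion. If one insists on the exact displayed constant, it corresponds to normalising time by $\tau_c$, and I would point this out rather than hide it.

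Third, I would show the expression is nonzero. Transversality \eqref{eq:pn-transversal} gives $0 < |\partial_\tau\psi_{ij}| < \infty$, so the prefactor is finite and strictly positive. The vector $\hat e_{ij}$ is a unit vector because $\psi_{ij}(\tau^P_{ij}) = 0$ forces $\|p_i + \tau^P_{ij} v_i - p^P_j\| = 2r_{col}$. Since $\mathrm{rank}\, g^v_i = d$, the map $g^v_i(x_i)^{\top}$ is injective on $\R^d$, so $g^v_i(x_i)^{\top}\hat e_{ij} \neq 0$. Hence $L_{g_i} h^p_{ij} \neq 0$. This implies $\nabla h^p_{ij} \neq 0$ on $\partial\Sset$, so $0$ is a regular value, and $u_i$ appears at the first derivative of $h^p_{ij}$, which is relative degree one. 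The HOCBF construction of \cite{xiao2021high} is therefore unnecessary.

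The main obstacle is the first step. The pursuer trajectory $p^P_j$ itself depends on $(p_i, v_i)$ through the feedback $\pi^P_j(\cdot\,, x_i)$. A literal implicit-function computation would then add a term $-\partial_{v_i} p^P_j$ to $\nabla_{v_i}\psi_{ij}$, which is not accounted for in \eqref{eq:pn-gradients}. I would take Proposition~\ref{prop:lower-bound} as given, as the paper allows. If that step were questioned, the fallback is to note that non-vanishing only requires $\nabla_{v_i}\psi_{ij}^{\top} g^v_i \neq 0$, and then to argue that this holds generically, or when the pursuer's dependence on the ego state is weak near capture.
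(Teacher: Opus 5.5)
This is essentially the paper's proof: you use the chain rule through the velocity channel of \eqref{eq:ego-dynamics}, substitute \eqref{eq:pn-gradients}, and invoke the unit length of $\hat e_{ij}$ and the full row rank of $g^v_i$. Both of your flags are sound. The chain rule does produce the factor $\tau^P_{ij}=\tau_c$, as in Corollary~\ref{cor:rel-degree}, so the statement's displayed constant is a harmless typo and needs no time-normalization reading. Likewise, neglecting $\partial_{(p_i,v_i)}p^P_j$, which the feedback $\pi^P_j(\cdot\,,x_i)$ introduces, is a weakness of Proposition~\ref{prop:lower-bound} itself that this corollary, like the paper's one-line proof, simply inherits.
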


\begin{proof}
    Immediate from~\eqref{eq:pn-gradients}, $\partial\dot p_i/\partial u_i = 0$
    in~\eqref{eq:ego-dynamics}, unit $\hat e_{ij}$, and full row rank of $g^v_i$.
\end{proof}
Because $p_j^{P}$ depends on the pursuer's full state, including $v_j$, the barrier~\eqref{eq:pn-barrier} has $L_{g_j} h_{ij}^p \neq 0$ in general, so the support function $\sigma_p^{ij}$ in~\eqref{eq:support-fns} does not vanish and is retained in the adversarial CBF condition~\eqref{eq:cbf-condition}.

\subsection{Control Synthesis}
For all of the TTC barrier function formulations introduced thus far in Section~\ref{sec:ttc_cbf}, we adopt the following quadratic program (QP) based control law:
\begin{subequations}\label{eq:cbf_qp_controller}
\begin{align}
    \mathbf{u}_e^*(\mathbf{x})
      &= \argmin_{\mathbf{u}_e \in \Uset^{N_e}}\;
      \tfrac{1}{2}\big\|\mathbf{u}_e - \mathbf{u}_0(\mathbf{x})\big\|_{M}^{2}, \label{subeq:objective} \\
    \textrm{s.t.}\quad
    L_F h_{ij} &+ L_{G_e} h_{ij}\, \mathbf{u}_e - \sigma_p^{ij} - \sigma_w^{ij}
      \;\geq\; -\alpha\big(h_{ij}(\mathbf{x})\big),
      \qquad \forall\, (i,j) \in \mathcal{P}, \label{subeq:cbf_constraint}
\end{align}
\end{subequations}
where~\eqref{subeq:objective} seeks to minimize the deviation from the nominal controller
$\mathbf{u}_0\colon \Xset^{N_a} \to \Uset^{N_e}$ under a weighting matrix
$M \in \R^{m \times m}$, $M \succ 0$, and
\eqref{subeq:cbf_constraint} enforces one adversarial CBF constraint per monitored pair. 

\begin{remark}[Well-posedness of the controller]\label{rem:qp}
    Problem~\eqref{eq:cbf_qp_controller} has a strictly convex objective and affine constraints, so wherever its feasible set is nonempty the minimizer $\mathbf{u}^*_e(\mathbf{x})$ is unique. 
    By Corollary~\ref{cor:rel-degree} the constraint gradients $L_{G_e}h_{ij}$ are nonzero on $\partial\Sset$; where in addition the active constraints satisfy LICQ, the solution map $\mathbf{x} \mapsto \mathbf{u}^*_e(\mathbf{x})$ is locally Lipschitz~\cite{ames_control_2019}, which is the regularity required by Lemma~\ref{lem:nagumo} and Theorem~\ref{thm:safety}, and the robust safety guarantee then holds wherever~\eqref{eq:cbf_qp_controller} is feasible. 
    Because the input set $\Uset^{N_e}$ is possibly bounded, feasibility may fail near $\partial\Sset$; characterizing the feasible region requires an input-feasibility condition relating $\tau_c$ to the control authority $\bar a_i$, which lies beyond the scope of the present work.
\end{remark}

Two barriers of the form~\eqref{eq:attc_barrier} are available for use in \eqref{subeq:cbf_constraint}: the aTTC-CBF $h_{ij}$ defined by~\eqref{eq:attc_barrier} and the pursuit-law TTC-CBF $h_{ij}^p$ defined by~\eqref{eq:pn-barrier}.
Whereas $h_{ij}$ provides strong guarantees against collisions with a maximally pursuant adversary, the form $h_{ij}^p$ circumvents the computational challenges of determining the reachable pursuer set at the cost of robustness to the pursuer's chosen control policy.
Challenges remain, however, in that integrating the system trajectories forward in time may be computationally demanding for complex dynamics models.
As such, we opt to train a fast and differentiable surrogate model to approximate the pursuit-law TTC $\tau_{ij}^P$.

We seek a parameterized mapping from the state of a given agent pair $(x_i, x_j)$ to the
pursuit-law TTC $\tau^{P}_{ij}$ of Definition~\ref{def:pn-ttc}. Let the relative position of
the two agents be $\Delta p_{ij} \equiv p_i - p_j$ and their velocities be $v_i$ and $v_j$.
Since the agents are subject to speed limits, the surrogate must be conditioned on the
absolute velocities of each agent during both training and evaluation, rather than the
relative velocity alone.
In addition to the state, $\tau^{P}_{ij}$ depends on the system parameters: the collision
radius $r_{col}$ and the pursuer's speed and acceleration bounds $\bar v_j$ and $\bar a_j$. The
strongest dependence is on $\bar v_j$, the pursuer's maximum attainable speed, which sets the
rate at which the pursuer closes on the ego; for clarity we hold $r_{col}$ and $\bar a_j$ fixed
and expose the model's parametric dependence on $\bar v_j$ alone.
We then seek a model parameterized by $\theta$,
\begin{equation}\label{eq:ml_model}
    G_{\theta}[\Delta p_{ij}, v_i, v_j, \bar v_j]\colon
    \R^d \times \R^d \times \R^d \times \R \to \R,
\end{equation}
which approximates the forward integration of the assumed pursuit law $\pi^P_j$ implicit in
Definition~\ref{def:pn-ttc}, and is differentiable in $(\Delta p_{ij}, v_i, v_j)$ so that the
barrier gradients required by~\eqref{subeq:cbf_constraint} are available in closed form.

If the surrogate is to apply across a range of pursuer types, its predictions must be
\textit{conditioned} on the assumed pursuer capability. We incorporate the parametric
dependence on $\bar v_j$ through Feature-wise Linear Modulation
(FiLM)~\cite{perez_film_2017}, which adds a parallel fully-connected branch that takes the
system parameters as input and modulates the activations of the main branch.
Specifically, for a layer with weights $W$, bias $b$, and activation $\operatorname{a}(\cdot)$,
the standard output $y_{out} = \operatorname{a}(Wx + b)$ is replaced by
\begin{equation}\label{eq:FiLM}
    \begin{aligned}
    z &= \gamma \odot (Wx + b) + \beta, \\
    y_{out} &= \operatorname{a}(z),
    \end{aligned}
\end{equation}
where $\gamma$ and $\beta$ are produced by the parallel branch and $\odot$ is the elementwise
Hadamard product. Because $\bar v_j$ acts by \textit{scaling} the time-to-collision, i.e., setting
how quickly the pursuer closes the gap to the ego, it is more naturally incorporated through
this multiplicative modulation than as an additional input to the main branch, though the
optimal choice may depend on the application and on the number of parameters exposed to the
CBF. We stress that the present study is not a demonstration of frontier deep-learning
capability but a conceptual instance of a data-driven CBF. 
Further details of the surrogate's training appear in Appendix~\ref{app:nn-training}.
 
We therefore consider the deployed barrier
\begin{equation}\label{eq:barrier}
    h_{ij}^\theta(\mathbf{x}) = G_{\theta}[\Delta p_{ij}, v_i, v_j, \bar v_j] - \tau_c,
\end{equation}
\noindent where $G_\theta$ is trained to
reproduce the pursuit-law TTC $\tau^{P}_{ij}$ of Definition~\ref{def:pn-ttc} from labels
generated offline by forward integration of the assumed law $\pi^P_j$. It replaces the
online trajectory integration underlying $h^p_{ij}$ with a single network evaluation, so
that the barrier and all its gradients in~\eqref{subeq:cbf_constraint} are available at each
control step. The inputs $(\Delta p_{ij}, v_i, v_j, \bar v_j)$ are those on which
$\tau^{P}_{ij}$ depends, with the pursuer speed bound $\bar v_j$ conditioning the surrogate on
the assumed adversary capability.

We stress the resulting tradeoff. Because $G_\theta$ approximates $\tau^{P}_{ij}$ with
two-sided error, the guarantee of \S\ref{sec:surrogate} (forward invariance of
$\{h^p_{ij} \ge 0\}$ certifying $\tau^{P}_{ij} \ge \tau_c$ against the assumed pursuit
law) does not transfer to the deployed barrier $h^\theta_{ij}$. It can be recovered by
training under the one-sided constraint $G_\theta \le \tau^{P}_{ij}$, so that
$\{h^\theta_{ij} \ge 0\} \subseteq \{\tau^{P}_{ij} \ge \tau_c\}$; this reinstates safety
against the assumed law at the cost of some conservatism and is a natural extension. Since
$\tau^{P}_{ij}$, and hence $G_\theta$, depends on the pursuer velocity $v_j$, the Lie
derivative $L_{g_j} h^\theta_{ij}$ is nonzero and the adversarial term
$\sigma_p^{ij}$ in~\eqref{subeq:cbf_constraint} is retained, consistent with
Corollary~\ref{cor:pn-rel-degree}.

Forward invariance is moreover conditional on feasibility of~\eqref{eq:cbf_qp_controller}: as noted in Remark~\ref{rem:qp}, bounded control authority, geometric constraints, and process noise can each render the program infeasible independent of the surrogate's accuracy, a known limitation of the controller~\eqref{eq:cbf_qp_controller} for functions $h_{ij}$ that remain unverified, i.e., candidate CBFs. 
We show in \S\ref{sec:results} that over long horizons the non-adversarial collision rate of the surrogate-based controller is on the same order as that of an exact distance-based CBF.

We note that neural networks have also been used to learn forward-invariance conditions
directly from data~\cite{dawson_safe_2021,dawson_safe_2022,yu_sequential_2023,so_how_2023},
an idea recently extended to multi-agent systems through graph neural
networks~\cite{zhang2025gcbf+}. CBFs have further been incorporated as differentiable layers
within network architectures~\cite{xiao_barriernet_2023}; such constructions are compatible
with any NN-based controller and trainable by standard gradient descent.

\section{Simulated Case Studies}\label{sec:results}

\subsection{Dynamics Model}
As a prototype model of autonomous aircraft dynamics we consider for each agent a 3D Dubins type model where the state is parameterized by the six dimensional vector
$x_i=[p^x_i,p^y_i,p^z_i,\psi_i,\gamma_i,V_i]$ corresponding to the three positional coordinates, yaw, pitch, and speed -- the latter three of which fully determine the velocity vector of the agent. The deterministic drift components \eqref{eq:joint-dynamics} take the form
\begin{subequations}\label{eq:dubins}
\begin{align}
    f_i(x_i) &= 
    V_i\begin{bmatrix}
         \cos\gamma_i \cos\psi_i, \cos\gamma_i \sin\psi_i,  \sin\gamma_i,0,0,0
    \end{bmatrix}^T, \\
    g(\mathbf{x}_i)\mathbf{u}_i &= 
    \begin{bmatrix}
        \mathbf{0}_{3\times3},\mathbf{I}_{3\times3}
    \end{bmatrix}^T[u^\psi_{i},u^\gamma_{i},u^v_{i}]^T
\end{align}
\end{subequations}
with control limits $|u^\psi_{i}| \leq \dot{\psi}_{\max}$,  $|u^\gamma_{i}| \leq {\dot \gamma_{\max}}$,  $|u^v_{i}| \leq a_{\max}$. Inspired by the aerospace application relevant to this study, our simulation is conducted in nominal units of kilometers and seconds. 
The stochastic forcing is assumed uncorrelated across dimensions and constant across
agents, and is generated as an increment of a Wiener process with diffusion
coefficients
\begin{equation}\label{eq:sigma}
    \sigma_i = \operatorname{diag}\left([0.01, 0.01, 0.01, 0.01, 0.005, 0.005]\right)
\end{equation}
in units of km\,s$^{-1/2}$, km\,s$^{-1/2}$, km\,s$^{-1/2}$, rad\,s$^{-1/2}$,
rad\,s$^{-1/2}$ and km\,s$^{-3/2}$ respectively, ordered as in $x_i$. The realised
increment is truncated componentwise at $3\sigma$ so that the disturbance $\mathbf{w}$ takes
values in a compact, convex set $\mathcal{W}$, as required by
Lemma~\ref{lem:nagumo} and Definition~\ref{def:cbf}; the support
function~\eqref{eq:sigma_w} is evaluated in closed form on this set and included
in~\eqref{subeq:cbf_constraint}.
The nominal control limits are set to $a_{\max} = 0.05$~km/s$^2$, and $\dot{\psi}_{\max} = 0.4$, and $ \dot{\gamma}_{\max} = 0.2$~rad/s and the collision radius and maximum speed are set at $r_{col} = 0.1$km and $v_{\max} = 0.5$~km/s respectively.  These, together with the noise levels above, are chosen to be broadly representative of a modern fighter-class aerial vehicle rather than calibrated to any specific airframe.

\begin{remark}\label{rem:dubins}
For the model~\eqref{eq:dubins} the velocity is $v_i = V_i\,e_i$ with $ e_i = (\cos\gamma_i\cos\psi_i,\;\cos\gamma_i\sin\psi_i,\;\sin\gamma_i), \quad
  e_{\psi,i} = (-\sin\psi_i,\;\cos\psi_i,\;0), \quad
  e_{\gamma,i} = (-\sin\gamma_i\cos\psi_i,\;-\sin\gamma_i\sin\psi_i,\;\cos\gamma_i)$,
so that $\dot v_i = g^v_i(x_i) u_i$ with
$g^v_i(x_i) = \big[\, V_i\cos\gamma_i\, e_{\psi,i} \ \ V_i\, e_{\gamma,i} \ \ e_i \,\big]$.
The triad $\{e_{\psi,i}, e_{\gamma,i}, e_i\}$ is orthonormal, hence
$\operatorname{rank} g^v_i = 3 = d$ whenever $V_i > 0$ and $\gamma_i \neq \pm\pi/2$: the
condition~\eqref{eq:ego-dynamics} and Corollary~\ref{cor:rel-degree}
apply away from the hover and vertical-flight singularities. The attainable
acceleration set is the state-dependent box
\[
  \mathcal{A}_i(x_i) = \big\{\, V_i\cos\gamma_i\, u^\psi_i e_{\psi,i} + V_i\, u^\gamma_i e_{\gamma,i}
  + u^v_i e_i \;:\; |u^\psi_i| \le \dot\psi_{\max},\;
  |u^\gamma_i| \le \dot\gamma_{\max},\; |u^v_i| \le a_{\max} \,\big\},
\]
which contains $\mathcal{B}(0, \bar a_i)$ with
$\bar a_i = \min\{v\cos\gamma\,\dot\psi_{\max},\; v\,\dot\gamma_{\max},\; a_{\max}\}$.
At the nominal cruise condition $v = 0.25$~km/s, $\gamma = 0$ this gives
$\bar a_i = 0.05$~km/s$^2$, and at the fast-pursuer speed $v = 0.75$~km/s it gives
$\bar a_j = 0.05$~km/s$^2$, so hypothesis~(ii) of Proposition~\ref{prop:finite} holds
with the nominal acceleration limit at both operating points. Acceleration authority degrades as $v \to 0$ or $|\gamma| \to \pi/2$.
\end{remark}

\subsection{Neural Network Training}
To generate the training data we construct a series of scenarios where two agents navigate 3D space, e.g., follow randomly spaced way-points or fly at each other in a head on configuration. The various scenarios are chosen to expose the NN to a wide array of relative positions and velocities. 
\textcolor{black}{The simulations are run for 50,000 simulated seconds at a time step of 0.1s and repeated four times each of which with the agents flying at a different nominal speeds: $v_{nom} = [0.15,0.25,0.35,0.5]$km/s -- the evader max speed is fixed at $0.5$ km/s for all simulations. For each of these trajectories we then compute the aTTC  at five distinct values of $v_{max}= [0.1,0.3,0.5,0.7,0.9]$ km/s values each of which leads to an independent set of paired training data: $[\Delta p,v_i, v_j,v_{max}]\rightarrow \tau^*(v_{max})$. This leads to a total training data set of $50,000\times 10\times 4\times 5 = 10,000,000$ samples.
The wide spread of scenarios helps expose the NN to a wide variety of relative speeds and helps calibrate the FiLM modulation branch of the architecture. This in turn allows the model to be applicable to a wide range of ego agent speeds and assumed pursuer speeds. The data set is randomly shuffled and split into a training portion: 75\% and a testing portion: 25\%. The NN architecture is a simple feed-forward network. The main branch has 3 fully connected layers (sizes: 128-128-64) which maps $\Delta p,v_i, v_j \rightarrow \tau^*$. The FiLM branch has layer sizes 16-16-640. The final size in the FiLM branch being determined by the total number of parameters needed to apply the modulation \eqref{eq:FiLM} to the output of each main branch layer.}

\textcolor{black}{To illustrate the performance of the model as a function of the time horizon we show in Figure~\ref{fig:ml_cond_mae} the conditional MAE and MPE (mean percentage error) for both the training and testing data -- over all values of $v_{max}$. The conditional error is defined as a function of the threshold $T$ as the error for all samples where the true $\tau^* < T$. We show both the error when the NN is evaluated in samples seen in training (circles) and unseen testing data (solid lines). We highlight three aspects of this figure. First, the error on seen and unseen data is nearly identical highlighting that our model successfully generalizes. Second, in the most relevant regime: $T < ~30s$ the model maintains a MAE of less than 1.5s -- a reasonable level of accuracy for our purposes. Three, the MPE (approximately) plateaus for large values of $T>~60$s indicating that our model retains relative accuracy for safe configurations where $\tau^*$ is large. Next, we demonstrate the performance of the model as a function of $v_{max}$. Figure~\ref{fig:ml_error_pdf} shows the pdf of the raw error for the critical regime of $\tau^*<30$s for four distinct values of $v_{max}$: 0.3 (slower than the evader), 0.5 (as fast as the evader), 0.6, and 0.9 (both faster than the evader). The mean (bias) and skewness of the error is listed in the legend. Consistent with Proposition~\ref{prop:finite} we observe very low error ($<1$s) for all cases where the pursuer is \textit{at least as fast} as the evader, while for the slower pursuer case we observe slightly higher biases ($~2$s). This is due to the training data having significantly fewer samples to learn from and the samples tending to higher values of $\tau^*$ -- slower pursuers take longer to catch the target. Additionally, we see that for all cases the bias and skewness are negative, indicating the model is conservative or tends to underpredict -- a generally more favorable bias for collision avoidance. This is also evidenced by the noticeably heavier negative tail of the pdf.}

\begin{figure}
    \centering
    \begin{subfigure}[t]{0.49\textwidth}
        \centering
        \includegraphics[width=\textwidth]{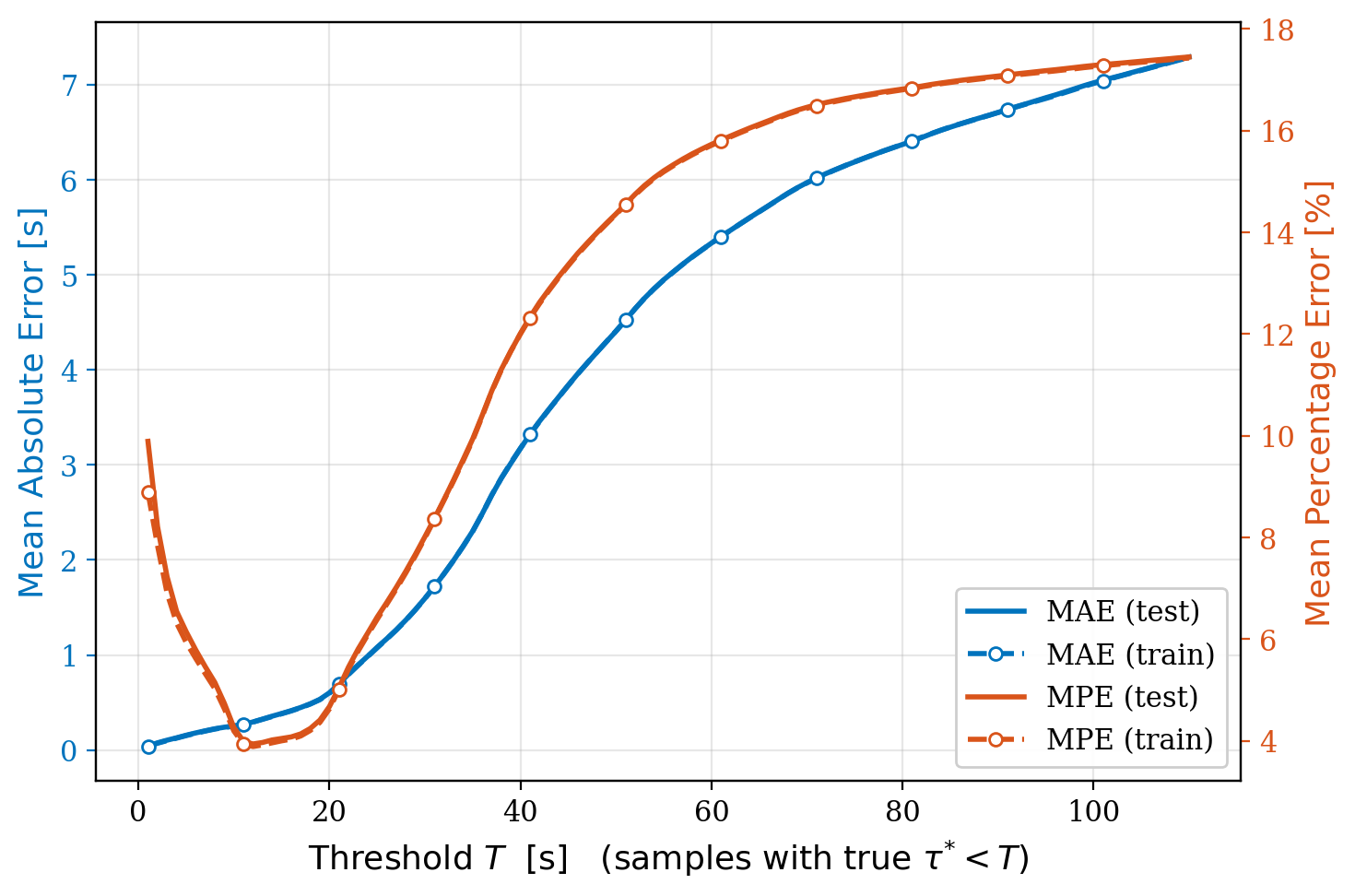}
        \caption{Conditional MAE (blue/left-axis) and conditional MPE (red/right-axis) of NN surrogate applied to training data (dashed lines/circles) and unseen test data (solid lines).}
        \label{fig:ml_cond_mae}
    \end{subfigure}
    \hfill
    \begin{subfigure}[t]{0.49\textwidth}
        \centering
        \includegraphics[width=\textwidth]{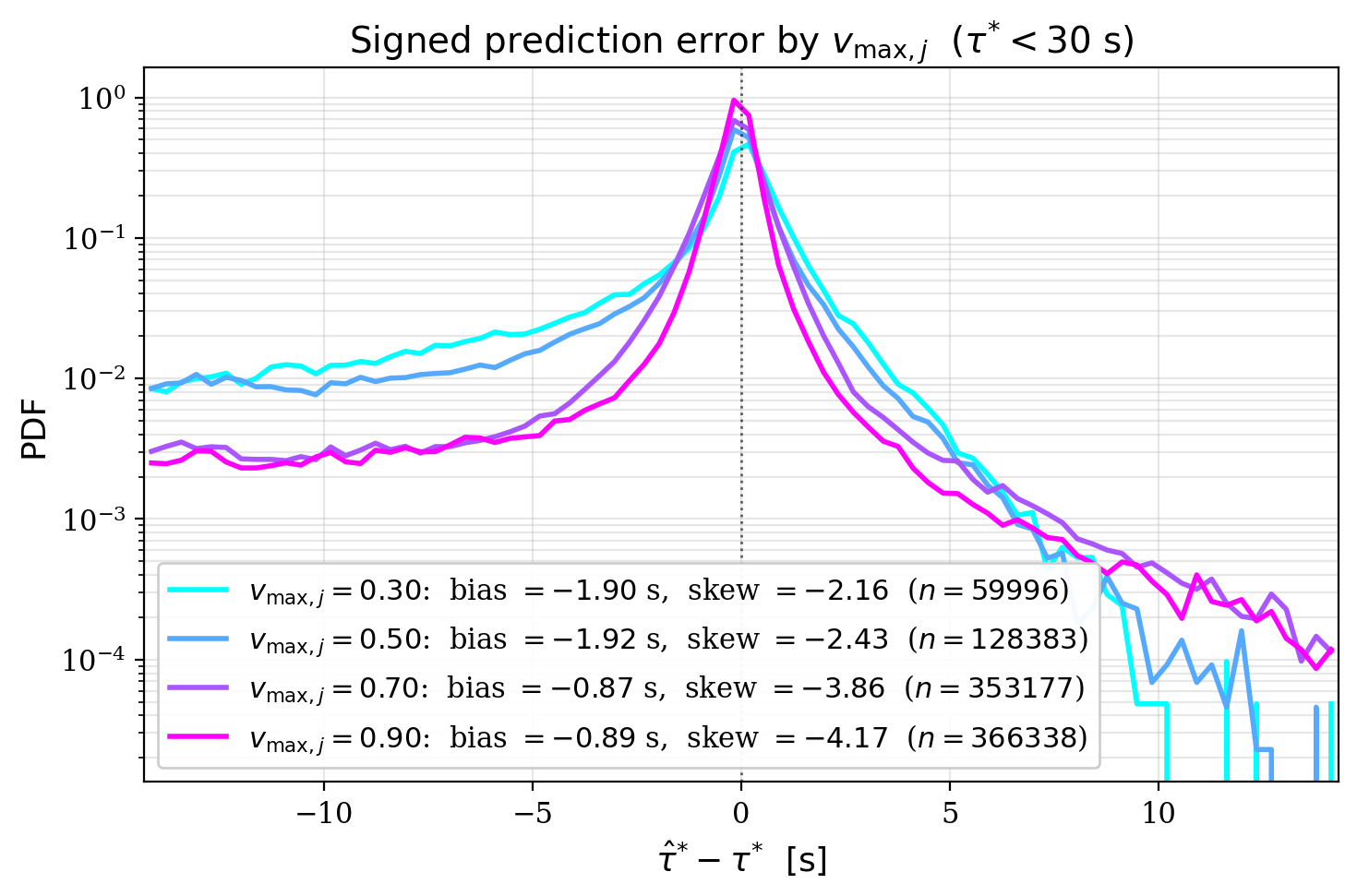}
        \caption{Probability density functions of NN prediction error for various assumed values of $v_{max}$. Legend shows mean bias and skewness values - negative values indicate under-prediction, positive values indicate over-prediction. (units are in km/s and s)}
        \label{fig:ml_error_pdf}
    \end{subfigure}
    \caption{Error metrics for NN surrogate model.}
    \label{fig:ml_combined}
\end{figure}


\subsection{Numerical Experiment}
As a numerical experiment we consider the stochastically forced 3D Dubins model~\eqref{eq:dubins}. 
The agents are split into \textit{evaders} which attempt to achieve a nominal objective (follow waypoints) and \textit{pursuers} which attempt to collide with the evaders. All agents use proportional heading and speed control as their nominal inputs $\mathbf{u}_0$, in other words $u_{\alpha, i}^0 = K_{\alpha}(\alpha_{des.}-\alpha_{act.}), ~\ \alpha = \psi_i,\gamma_i,V_i$. For the heading angles ($\psi_i, \gamma_i$) the desired heading corresponds to the direction of the next waypoint, for the speed, the desired speed is a fixed nominal speed. 
The evaders travel at a nominal speed of $V^e_{nom}=0.25$km/s and are capable of accelerating to a maximum speed of $V^e_{max} = 0.5$km/s to avoid collision. The pursuers attempt to collide with the evaders through the same proportional control scheme that navigates towards a far off point collinear with the pursuer and agent -- as illustrated in Figure~\ref{fig:missions}c. This ensures that the pursuer attempts to intercept the evader at a maximum speed $V^p_{max}$.

\textcolor{black}{We consider multiple scenarios: \textit{independent multi-agent pursuit evasion} where multiple evaders pursue independent objectives (waypoints) both without and while evading multiple pursuers and \textit{formation flight pursuit evasion} where a group of evaders in formation avoid pursuers while following the same objectives.} Figure~\ref{fig:missions} shows a schematic diagram of the two mission scenarios. In order to obtain a statistical view of the collision avoidance and waypoint tracking skill we run all simulations for $T_{sim} = 50,000$s. \textcolor{black}{In order to illustrate the benefits of the aTTC-CBF, we compare our method to a higher-order CBF (HOCBF) -- both the aTTC-CBF and the HOCBF have a similar per-call overhead of less than 1.0 ms in Matlab on a CPU. We note that this comparison is \textit{not} intended to demonstrate the superiority of our approach to any \textit{specific} state-of-the-art method but rather to illustrate phenomenologically the advantages of time based control barrier formulation. Both the aTTC-CBF and HOCBF rely on several parameters. First, the activation scale: $\tau^*_a, r_a$, which is the aTTC or distance above which the CBF is inactive -- this is particularly relevant for formation flight where an unrestricted CBF constraint can break formation even under nominal conditions. The second is the critical scale $\tau^*_c, r_c$  defined in \eqref{eq:attc_barrier} and \eqref{eq:hocbf_barrier} which define the barrier functions themselves. Third is the class $\mathcal{K}$ function $\alpha$ used to define the forward invariance condition \eqref{eq:forward-invariance}. Furthermore, these parameters must be chosen independently for evader-evader interactions and evader-pursuer interactions.  
For both the aTTC-CBF and HOCBF we optimized these parameters to minimize collisions and maximize mission progress. }

\begin{figure}[t]
    \centering
    \includegraphics[width=0.9\linewidth]{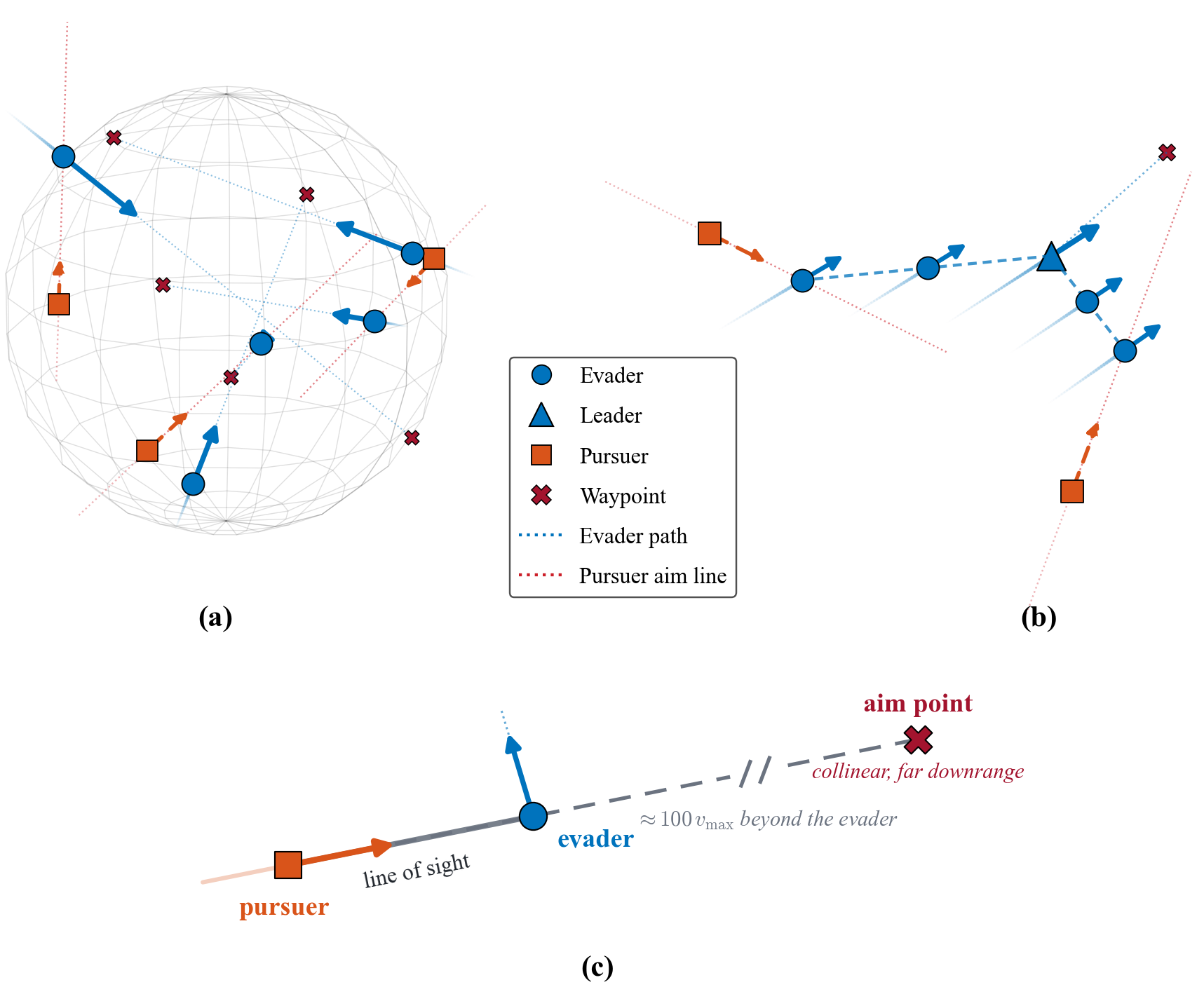}
    \caption{Schematic of the independent multi-agent (a) and formation-flight (b) pursuit--evasion scenarios and the pursuit control scheme (c). Evaders are shown in blue and pursuers in red; the leader is present only in the formation case.}
    \label{fig:missions}
\end{figure}

\subsection{Independent Multi-Agent Collision Avoidance}
As our first scenario we consider multiple agents all pursuing independent waypoints in a dense domain while avoiding collision among themselves and with any potential adversarial pursuers. The agents are initialized at random locations on a sphere of radius $R = 3.75$\,km. Each evader then navigates to a random location on the opposite hemisphere at a nominal speed of $v_{nom}^e = 0.25$\,km/s; once the waypoint is reached, a new waypoint on the opposite hemisphere is chosen, \emph{ad infinitum}. Each pursuer chases its nearest evader until collision, after which it selects a new random evader. The sphere radius is set so that the nominal travel time across the diameter is 30~s, yielding a dense, target-rich environment in which agent trajectories intersect in a variety of ways. By always placing the next waypoint on the opposite hemisphere, we ensure that evaders cannot simply exploit a speed advantage to continuously outrun their pursuers.

We consider three cases: (i) 8 independent evaders with no adversarial
pursuers; (ii) 8 evaders and 3 \emph{slower} pursuers, $V_{max}^P < V_{max}^e$; and (iii) 8 evaders and 3 \emph{faster} pursuers, $V_{max}^P > V_{max}^e$. The slower- and faster-pursuer cases use $V_{max}^P/V_{max}^e = 0.9$ and $1.5$, respectively. We compare three controllers: the no-CBF baseline, the reference HOCBF, and our aTTC-CBF; the results are summarized in Table~\ref{tab:3d_pursuit_results}.

\paragraph{Aggregate performance}
Without pursuers, the aTTC-CBF incurs roughly a third as many collisions as the HOCBF---relative to an already very low baseline---while achieving an approximately threefold higher waypoint-progress rate, only about $10\%$ below the no-CBF baseline. That the NN-based CBF achieves a lower collision rate \emph{despite} much faster forward progress in the non-adversarial case indicates that the uncertainty introduced into the forward-invariance condition by the surrogate's approximation error is small relative to the uncertainty incurred by the HOCBF encountering an infeasible constraint. With the introduction of slow pursuers, both CBFs drastically reduce the (now much higher) collision rate; here the aTTC-CBF again records less than half as many collisions as the HOCBF while maintaining a $1.8\times$ higher waypoint-progress rate. Notably, in the slow-pursuer case the aTTC-CBF actually exceeds the baseline progress rate, because the evaders travel at a higher average speed in their attempt to avoid the pursuers. These trends are exacerbated when the pursuers are faster than the evaders: the HOCBF progress rate drops to roughly $30\%$ of the no-CBF baseline, whereas the aTTC-CBF retains $58\%$ of the baseline rate while again recording half as many collisions as the HOCBF.

\paragraph{Mission-space spread}
By directly quantifying the estimated time-to-collision, the aTTC-CBF enables less conservative---and therefore more efficient---navigation by distinguishing between more and less dangerous configurations. We quantify this through the distance from each evader to the origin (the center of the mission space). The median values are reported in Table~\ref{tab:3d_pursuit_results}.
and the full PDFs are shown in Figure~\ref{fig:dist_pdf}. The spread of the HOCBF-equipped evaders grows quickly with pursuer speed, consistent with its falling waypoint-progress rate, whereas the aTTC-CBF spread rises only weakly. For the aTTC-CBF with no pursuers the median spread is nearly identical to the no-CBF baseline, rising by only $8\%$ and $37\%$ with the introduction of slow and fast pursuers, respectively. The HOCBF, while exhibiting a slightly lower-than-baseline median spread with no pursuers, shows a $20\%$ and $85\%$ increase above baseline under slow and fast pursuers. In summary, the aTTC-CBF outperforms the HOCBF across all cases and, critically, retains far more of its performance as the environment becomes increasingly adversarial.

\begin{figure}
    \centering
    \includegraphics[width=0.75\textwidth]{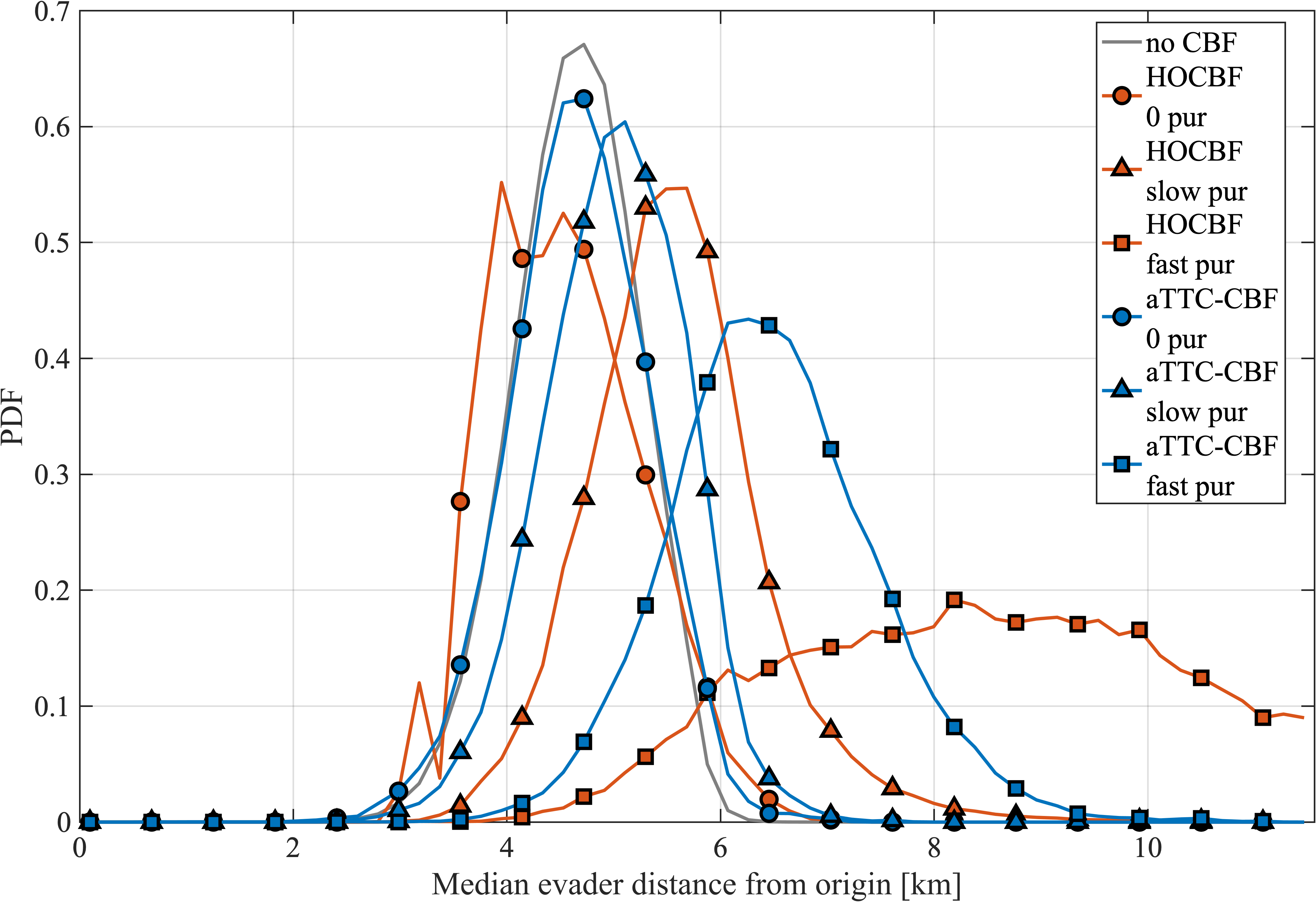}
    \caption{Probability density function of evader distance from origin summed over all evaders. aTTC-CBF (blue), HOCBF (red). Zero, slow, and fast pursuer scenarios show in circles, triangles, squares respectively. }
    \label{fig:dist_pdf}
\end{figure}

\paragraph{Near-miss escape probability}
To further analyze the collision-avoidance dynamics, we identify all
\emph{near-miss} encounters, defined as instances where the inter-agent distance drops below $4r_{col}$---twice the collision threshold. Let $t_{nm}$ be the time at which the inter-agent distance $d_{ij}$ first drops below $4r_{col}$ for a given pair; for a look-ahead time $T$ we then check whether a collision ($d_{ij} \le 2r_{col}$) occurs in the interval $[t_{nm}, t_{nm}+T]$. If so, we call the encounter \emph{unmitigated}; otherwise \emph{mitigated}. Figure~\ref{fig:random_sphere_col_frac} shows the fraction of unmitigated encounters---equivalently, $P(\text{collision}\mid\text{near-miss})$---as a function of $T$ for both pursuer cases. Most collisions occur by $T = 1$\,s. In the slow-pursuer case, the probability of collision given a close encounter is about five times lower for the aTTC-CBF than for the HOCBF. In the fast-pursuer case, the aTTC-CBF still escapes nearly $50\%$ of close encounters, whereas the HOCBF escapes only about $5\%$. Consistent with the aggregate results, the aTTC-CBF scales far more favorably with pursuer speed.

\paragraph{Evasive maneuver intensity}
To characterize \emph{how} the aTTC-CBF mitigates these encounters, we compute the average turning rate $|\dot{\theta}| \equiv \sqrt{\dot{\psi}^2 + \dot{\gamma}^2}$ over $[t_{nm}, t_{nm}+T]$ with $T = \min(1, T_{col})$, where $T_{col}$ is the time to collision. Figure~\ref{fig:random_sphere_turn} shows this metric across all near-miss encounters, with unmitigated collisions in red and mitigated encounters in blue; horizontal lines mark the group means, and the horizontal spacing of points is purely for visual clarity. As expected, for both controllers the mitigated encounters are associated with higher turning rates than the realized collisions---escape requires more drastic action. However, during these close encounters the aTTC-CBF sustains significantly higher turning rates than the HOCBF, and shows less separation between mitigated and realized cases. This indicates that the aTTC-CBF maneuvers more aggressively and operates closer to its maximum control limits---a consequence of its dynamics-aware formulation, which recognizes that collision risk can be reduced by changing relative velocity even in close proximity to another agent. The acceleration exhibits a similar trend, which we omit for brevity.

\begin{table*}[t]
\centering
\footnotesize
\renewcommand{\arraystretch}{1.15}
\caption{Statistics for the 3D independent multi-agent
pursuit-evasion experiment, by adversary regime.  Bold marks the better
method within each (metric, regime) pair.}
\label{tab:3d_pursuit_results}
\begin{tabular}{l@{\hspace{6pt}} c c c @{\hspace{10pt}} c c c @{\hspace{10pt}} c c c}
\hline
        & \multicolumn{3}{c}{\textbf{No pursuers}}
        & \multicolumn{3}{c}{\textbf{Slow pursuers}}
        & \multicolumn{3}{c}{\textbf{Fast pursuers}} \\
        & no CBF & HOCBF & aTTC-CBF
        & no CBF & HOCBF & aTTC-CBF
        & no CBF & HOCBF & aTTC-CBF \\
\hline
Collisions / 100\,s
        & 0.170  &        0.034  & \textbf{0.012}
        & 5.10   &        0.240  & \textbf{0.100}
        & 6.69   &        5.08   & \textbf{2.54} \\
Waypoints / evader / 100\,s
        & 2.19   &        0.677  & \textbf{1.99}
        & 2.19   &        1.31   & \textbf{2.41}
        & 2.19   &        0.656  & \textbf{1.27} \\
Median $\|\mathbf{r}\|$ [km]
        & 4.66   & \textbf{4.49} &        4.68
        & 4.66   &        5.58   & \textbf{5.04}
        & 4.66   &        8.64   & \textbf{6.40} \\
\hline
\end{tabular}
\end{table*}

\begin{figure}
    \centering
    \begin{subfigure}[t]{0.48\textwidth}
        \centering
        \includegraphics[width=\textwidth]{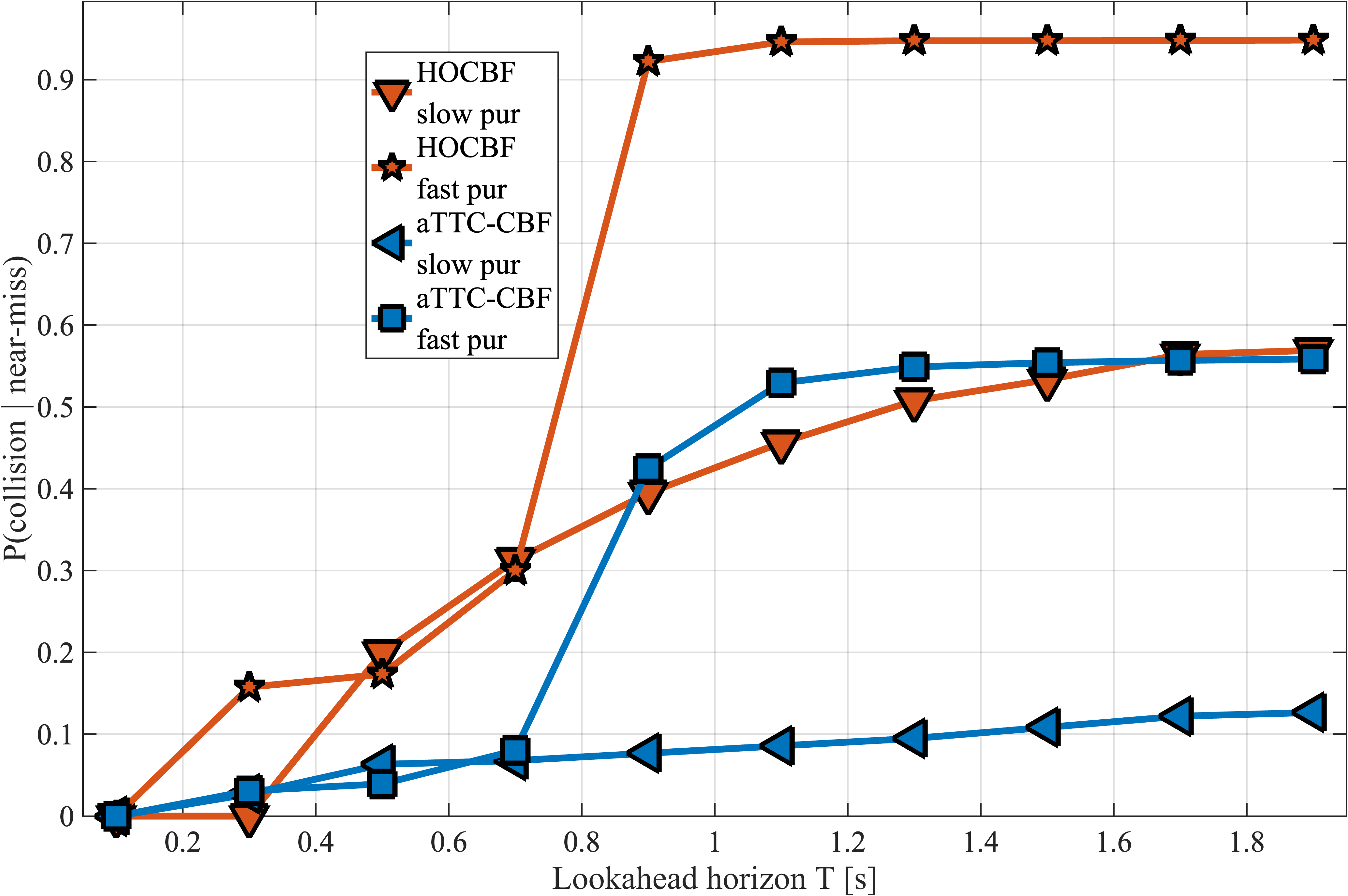}
        \caption{Fraction (probability) of collision following a close encounter as a function of look-ahead time $T$.}
        \label{fig:random_sphere_col_frac}
    \end{subfigure}
    \hfill
    \begin{subfigure}[t]{0.48\textwidth}
        \centering
        \includegraphics[width=\textwidth]{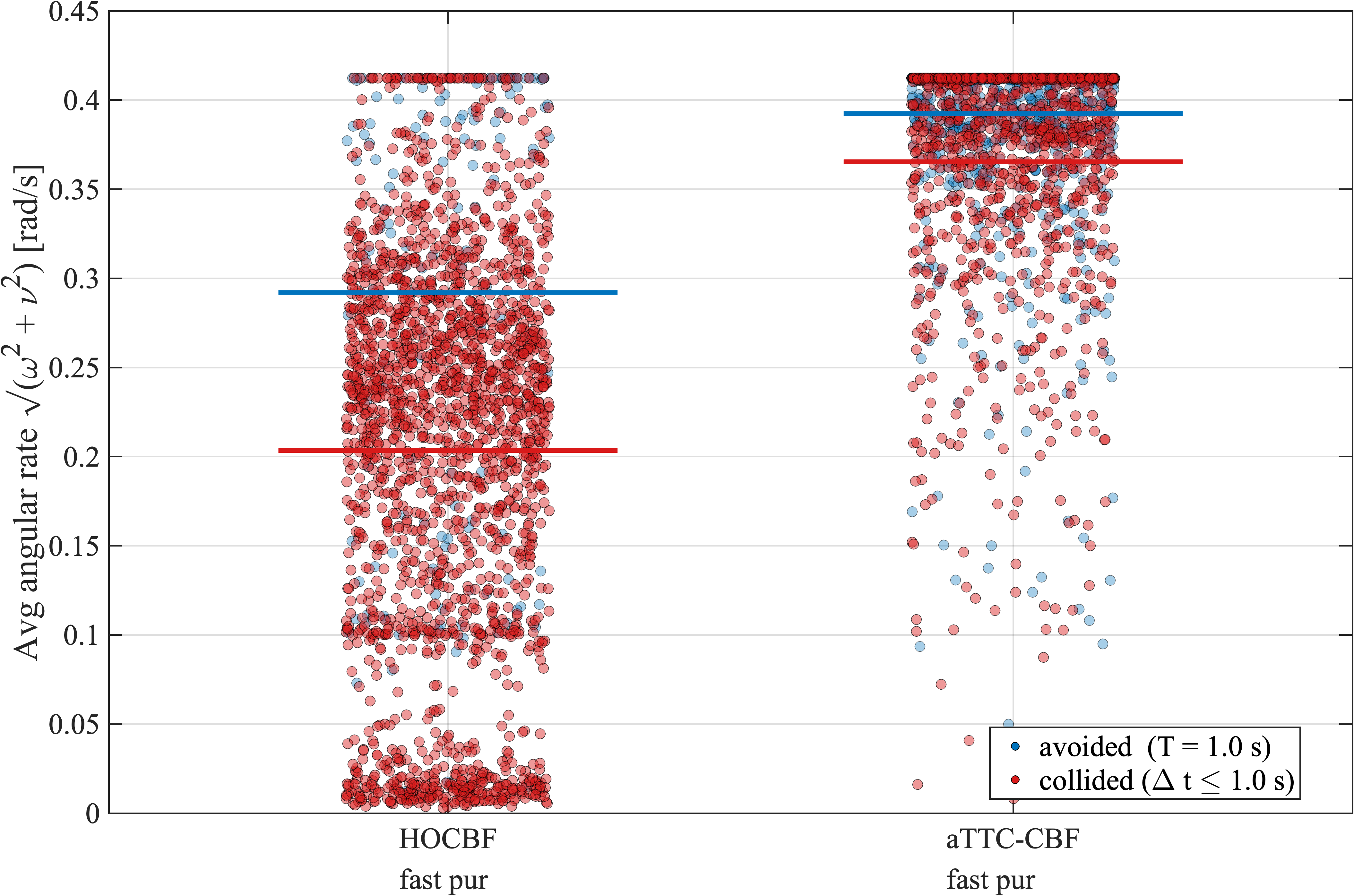}
        \caption{Average turning rate from near-miss encounter to collision or evasion.}
        \label{fig:random_sphere_turn}
    \end{subfigure}
    \caption{Near-miss dynamics for the independent multi-agent pursuit-evasion experiment.}
    \label{fig:random_sphere_nearmiss}
\end{figure}


\subsection{Formation Flight Pursuit Evasion}
Next we apply our aTTC-CBF to a collaborative mission environment. The eight evader agents fly loops through eight randomly spaced waypoints while in a chevron formation. The waypoints are  spread randomly throughout a domain of size $50\times40\times20$ km in the $x,y,z$ directions. The lead agent navigates towards the waypoint while the other evader agents navigate towards their (moving) assigned position in the chevron formation. We consider a chevron formation with an apex angle of $90^\circ$ and agent spacing -- along each arm -- of 2km. The three pursuer agents attempt to collide with the evaders using the same control and targeting algorithm described in the previous scenario. Although, we note that the CBF activation scales must be adjusted to allow for formation flight. We again consider both slower pursuer, $V^p_{max}/V^e_{max} = 0.9$, and faster pursuer, $V^p_{max}/V^e_{max} = 1.5$ and compare the aTTC-CBF and HOCBF -- we omit the no CBF and no pursuer baselines. The results are summarized in Table~\ref{tab:3d_formation_results}.

The overall collision and waypoint-progress rates are consistent with the independent-pursuit results. Under slow pursuers the two methods perform comparably: the HOCBF is marginally safer, avoiding collisions entirely, while the aTTC-CBF incurs a small but nonzero collision rate—comparable to the independent slow-pursuer case—and the two make nearly identical waypoint progress. The benefit of the aTTC-CBF emerges under fast pursuers, where it achieves roughly $9$\% fewer collisions and over $50\%$ greater waypoint progress. The more demanding test of the formation-flight scenario, however, is whether these collision and progress rates can be attained \emph{while holding the formation together}. This distinction matters because the formation mission is inherently less collision-dense than the previous example: where the spherically spaced waypoints forced a $\sim180^\circ$ turn at each target, producing frequent crossings of nominal trajectories, whereas the formation team tracks successive waypoints along largely parallel paths that seldom intersect.
\\
To quantify this coherence we measure the deviation—the Euclidean distance—of each evader from its nominal assigned slot behind the leader. The mean values are shown in Table~\ref{tab:3d_formation_results}. For the slow pursuer case both CBF methods have similar average deviations, yet the aTTC-CBF deviation remains largely unchanged with an increase in pursuer speed, while the HOCBF deviation nearly doubles.

Figure~\ref{fig:formation_deviation_pdf} shows the probability density of this deviation for the fast pursuer case accumulated over every timestep. We report two densities: one over all eight evaders (the \emph{full fleet}), and one that excludes, at each instant, the $N_p = 3$ most-deviated evaders (the \emph{core fleet}); the respective means are marked by dashed and dash-dotted vertical lines. Since the agents under active pursuit are precisely those driven furthest from formation, the core-fleet density approximates the coherence of the main formation once the $N_p$ pursued evaders are set aside. The aTTC-CBF fleet is markedly more coherent, with a mean deviation roughly half that of the HOCBF fleet. Moreover, the gap between the full- and core-fleet means is about $50\%$ smaller under the aTTC-CBF, indicating that even the actively-evading agents remain far closer to formation than their HOCBF counterparts. A qualitative illustration of these mechanisms is shown in Figure~\ref{fig:formation_trajs} which compares the trajectories of both CBF types -- evaders in blue, pursuers in red. Consistent with the previous results the aTTC-CBF formation is notably more compact. Note in particular, how the HOCBF equipped evaders targeted by the pursuers are chased far from the core of the formation -- a pathology not seen in the aTTC-CBF case.


\begin{figure}
    \centering
    \begin{subfigure}[b]{0.75\textwidth}
        \centering
        \includegraphics[width=\textwidth]{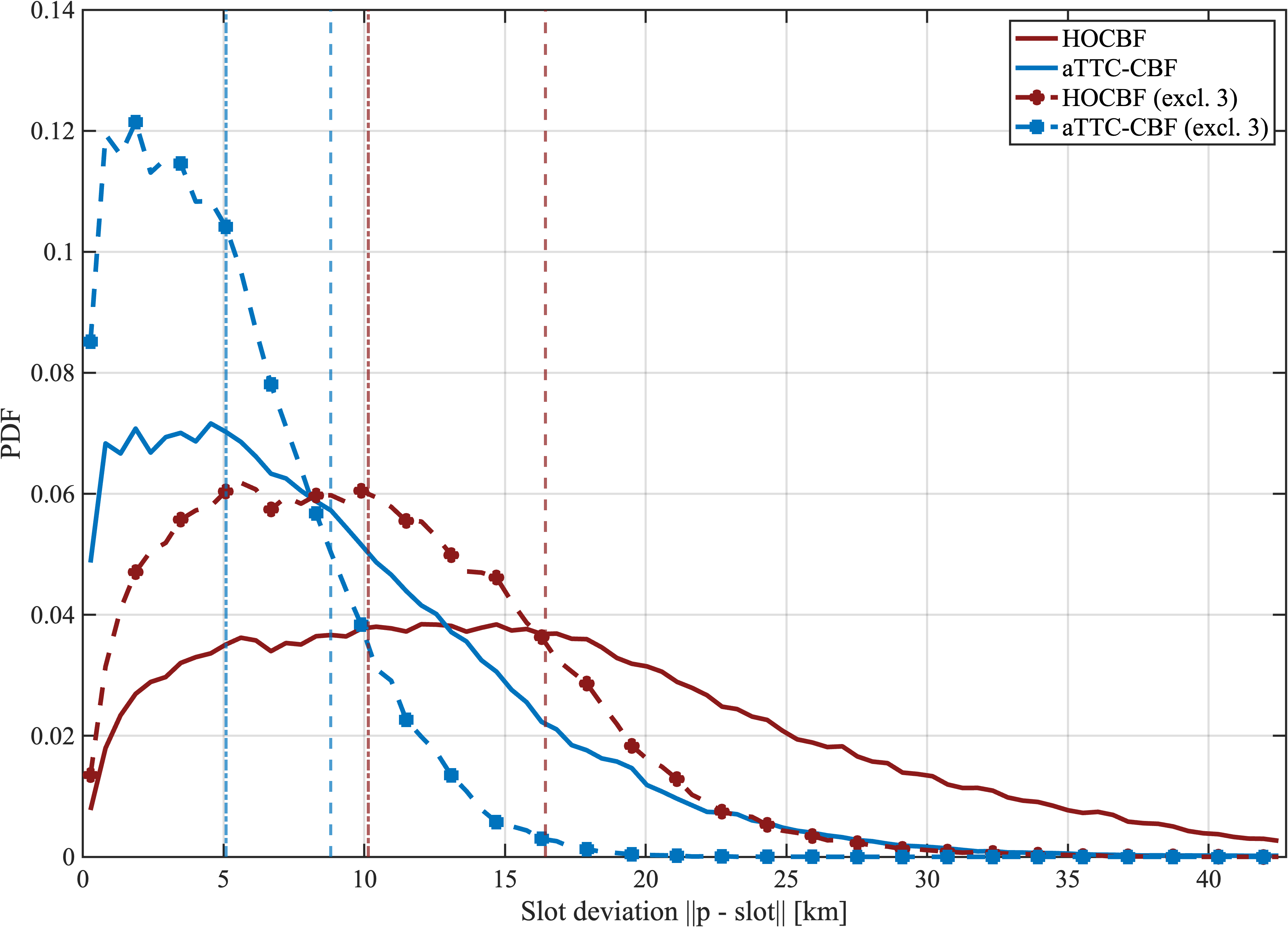}
        \caption{Probability density function of evader deviation from assigned formation location for aTTC-CBF (blue) and HOCBF (red). Full fleet (solid lines) and core fleet (dashed lines / circles).}
        \label{fig:formation_deviation_pdf}
    \end{subfigure}

    \vspace{1em}

    \begin{subfigure}[b]{0.95\textwidth}
        \centering
        \includegraphics[width=\textwidth]{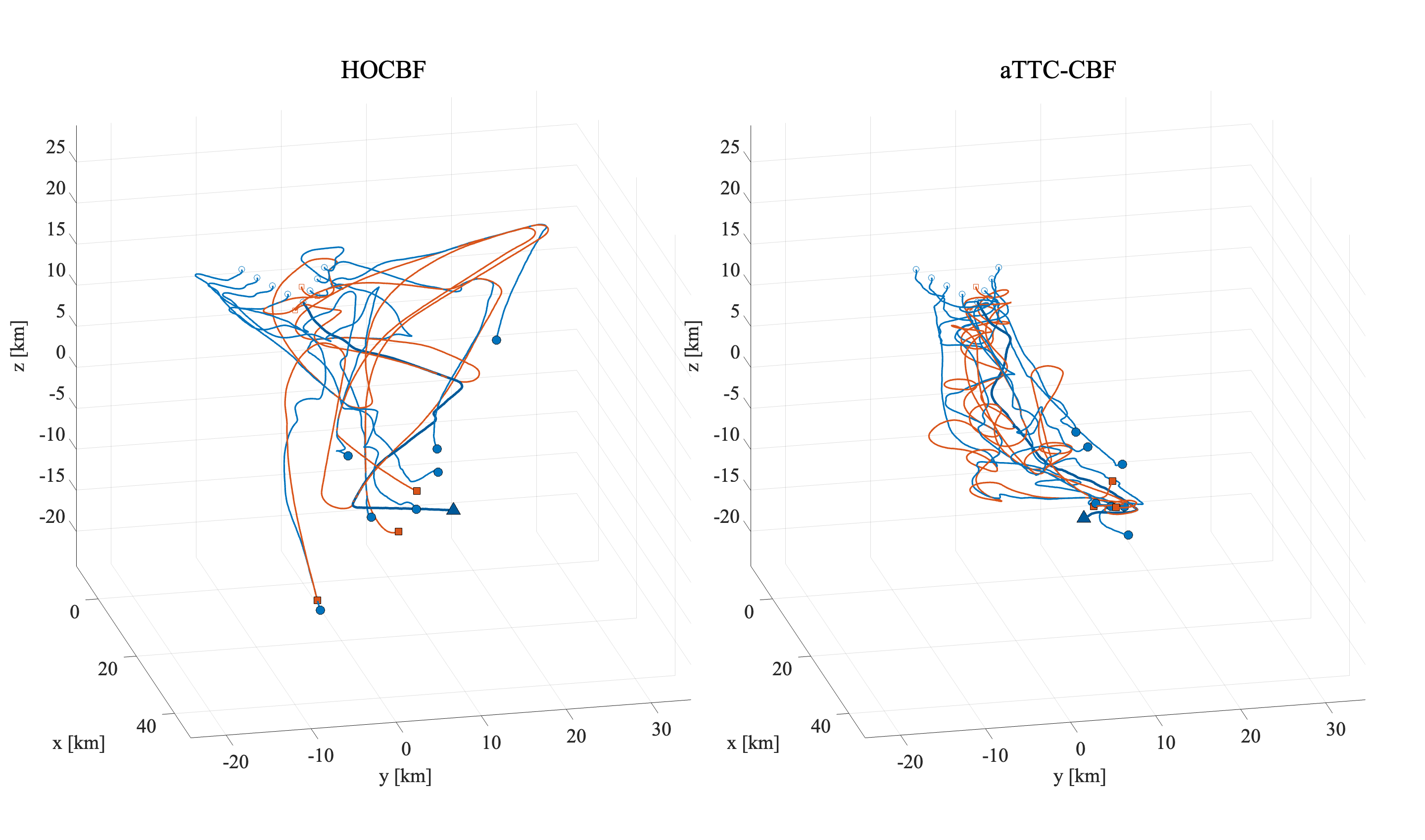}
        \caption{Trajectories for the first 200\,s of formation flight with HOCBF (left) and aTTC-CBF (right). Evaders are shown in blue, pursuers in red.}
        \label{fig:formation_trajs}
    \end{subfigure}
    \caption{Formation-flight pursuit-evasion under HOCBF vs.\ aTTC-CBF: formation coherence (a) and agent trajectories (b).}
    \label{fig:formation_combined}
\end{figure}

\begin{table*}[t]
\centering
\caption{Statistics for the 3D \emph{formation} multi-agent
pursuit-evasion experiment, by adversary regime.   Bold marks the
better method within each (metric, regime) pair. }
\label{tab:3d_formation_results}
\renewcommand{\arraystretch}{1.15}
\begin{tabular}{l@{\hspace{10pt}} c c @{\hspace{14pt}} c c}
\hline
                                & \multicolumn{2}{c}{\textbf{Slow pursuers}}
                                & \multicolumn{2}{c}{\textbf{Fast pursuers}} \\
                                & HOCBF          & aTTC-CBF
                                & HOCBF          & aTTC-CBF                 \\
\hline
Collisions / 100\,s
                                & \textbf{0.000} &        0.028
                                &        5.140   & \textbf{4.666}         \\
Waypoints / evader / 100\,s
                                &        1.54    & \textbf{1.58}
                                &        0.66    & \textbf{1.01}          \\
Mean slot deviation [km]
                                &        10.29   & \textbf{9.32}
                                &        16.43   & \textbf{8.81}          \\
Core slot deviation [km]
                                &        6.80    & \textbf{5.66}
                                &        10.14   & \textbf{5.08}          \\
\hline
\end{tabular}
\end{table*}


\section{Conclusions}\label{sec:conclusion}
We introduced the adversarial-time-to-collision (aTTC) as a robust worst-case risk metric for autonomous aerial vehicles. For a given ego agent, the aTTC is a dynamics aware metric which quantifies the estimated time-to-collision assuming maximally adversarial intent on behalf of any other surrounding agents. Quantifying collision risk directly in time rather than in position or velocity allows us to derive a metric which is intuitive in its interpretation and selective in its assessments. By integrating the system dynamics and assumed control authority limits we arrive at a metric which assigns low risk to agents on diverging trajectories -- even at very close proximity --  while agents on an intercept paths are appropriately marked as dangerous -- even at relatively large separations. Through a neural network based surrogate model we are able to incorporate the aTTC metric directly into a quadratic program based control barrier function collision avoidance algorithm -- coined the aTTC-CBF. When applied to both independent and formation-flight multi-agent pursuit evasion scenarios the selective nature of our aTTC-CBF enables agents to fly in closer proximity by maneuvering more aggressively -- and thus effectively -- between both friendly and adversarial agents. This leads to greater mission progress, lower collision rates, and more cohesive formation flight. The benefits of our aTTC-CBF were especially pronounced in increasingly adverserial settings. As pursuer speed increased the HOCBF performance degraded drastically, while our aTTC-CBF performance suffered only marginally.

Despite these advantages, some key limitations  must be acknowledged. First, while our aTTC-CBF framework demonstrated clear advantages over a distance based CBF in adversarial settings, in non-adversarial settings the improvement was more marginal. This implies that the current formulation may be too conservative in non-adversarial settings. Furthermore, the (current) definition of aTTC as the infimum over the domain of a functional can result in non-smooth gradients which require large amounts of training data to resolve. Both these challenges could be addressed by extending the aTTC from a single metric to a distribution over assumed pursuer intentions -- different measures of which could be used to quantify risk for different types of agents. Second, the current formulation assumes knowledge of the agent dynamics -- which may be unrealistic in some practical settings and would require data-driven updating of the assumed dynamics model. Finally, all data-driven models have some approximation error -- especially at the margins of the training data distribution. These may be mitigated through hybrid formulations which balance model based barriers when uncertainty is low and analytical barriers when uncertainty is high.
All these directions represent topics of ongoing work and we hope other authors find new and productive ways to incorporate our temporal barrier framework into their own research.

\section*{Data Availability}
The code used to to generate the data, train the neural networks, and generate all the results discussed herin can be found at \url{https://github.com/ben-barthel/Temporal_Barrier_Framework}.
\section*{Funding Sources}
DISTRIBUTION STATEMENT A. Approved for public release. Distribution is unlimited. This material is based upon work supported by the Department of the Air Force under Air Force Contract No. FA8702-15-D-0001 or FA8702-25-D-B002. Any opinions, findings, conclusions or recommendations expressed in this material are those of the author(s) and do not necessarily reflect the views of the Department of the Air Force. © 2026 Massachusetts Institute of Technology. Delivered to the U.S. Government with Unlimited Rights, as defined in DFARS Part 252.227-7013 or 7014 (Feb 2014). Notwithstanding any copyright notice, U.S. Government rights in this work are defined by DFARS 252.227-7013 or DFARS 252.227-7014 as detailed above. Use of this work other than as specifically authorized by the U.S. Government may violate any copyrights that exist in this work.
\section*{Appendix}
\appendix
\section{Surrogate Model}\label{app:nn-training}
The NN is trained using a weighted Huber loss
\begin{equation}    \mathcal{L}\left(\hat{\tau}^*,\tau^*\right)=\frac{1}{\tau^{*n}}\mathcal{L}_H\left(\hat{\tau}^*,\tau^*\right)
\end{equation}
where the $1/\tau^{*n}$ weight emphasizes low aTTC (high risk) configurations and the exponent $n$ tunes how heavily these are weighted. We found $n=3$ to give the best performance. We utilize this output value based weight as opposed to probability-based weighted losses \cite{rudy_output-weighted_2021}, as these overemphasize the rare, but unimportant, large values of $\tau^*$. The Huber loss itself is defined as
\begin{equation*}
\mathcal{L}_H\left(\hat{\tau}^*,\tau^*\right) =
    \begin{cases}
\frac{1}{2}(\tau^* - \hat{\tau}^*)^2, & \text{if } |\tau^* - \hat{\tau}^*| \leq \delta \\
\delta \left( |\tau^* - \hat{\tau}^*| - \frac{1}{2}\delta \right), & \text{otherwise}
\end{cases}
\end{equation*}
where $\delta$ is a user defined cutoff we set to $1$s. The Huber loss \cite{huber_robust_1964} applies a mean-squared-error (MSE) loss to small outputs and a linear loss to larger values. This ensures smooth behavior near zero without overly weighting errors for large values of $\tau^*$.  
\begingroup
\renewcommand{\url}[1]{}

\bibliography{refs/references_auspice}
\endgroup

\end{document}